\documentclass[12pt]{article}
\usepackage{amssymb}
\usepackage{amsmath}
\usepackage{amsthm}
\usepackage[space]{cite}
\usepackage{algorithm}
\usepackage{bbm}
\usepackage[noend]{algpseudocode}
\usepackage[normalem]{ulem}
\usepackage{url}
\usepackage{hyperref}
\usepackage{graphicx}
\usepackage{adjustbox}
\usepackage{caption}
\usepackage{subcaption}
\usepackage{enumitem}
\usepackage{xcolor}
\usepackage{comment}
\usepackage{blkarray}
\usepackage{multirow}
\usepackage{booktabs}
\usepackage{MnSymbol}

\usepackage{cases}

\numberwithin{equation}{section}

\usepackage[symbol]{footmisc}

 \topmargin 0pt	\oddsidemargin 0pt
	\headheight 0pt	\headsep 0pt
	\textwidth 6.5in	
	\textheight 9in		
	\marginparwidth .875in
	\parskip 5pt plus 1pt	\jot = 1.5ex
 
\addtolength{\topskip}{0pt plus 10pt}



\newcommand{\id}{\mathbb{I}}


\newcommand{\cut}[1]{\ifmmode\text{\textcolor{red}{\sout{\ensuremath{#1}}}}\else\textcolor{red}{\sout{#1}}\fi}

\newcommand{\RN}[1]{%
  \textup{\uppercase\expandafter{\romannumeral#1}}%
}

\usepackage{quantikz}
\usetikzlibrary{circuits.logic.US}
\tikzset{
    gateO/.style={
        draw,
        circle,
        minimum width=0.5em,
        inner sep=2pt    }
}
\DeclareExpandableDocumentCommand{\gateO}{O{}{m}}{|[gateO,#1]| {#2} \qw}

\tikzset{
    gateOS/.style={
        draw,
        circle,
        minimum width=0.5em,
        inner sep=2pt,
		fill=red!20}
}
\DeclareExpandableDocumentCommand{\gateOS}{O{}{m}}{|[gateOS,#1]| {#2} \qw}

\newtheorem*{lemma*}{Lemma}
\newtheorem*{prop*}{Proposition}

\begin{document}

\begin{titlepage}

\begin{center}
\vspace*{\skip5}

{\LARGE Simple ways of preparing qudit Dicke states}\\
\vspace{1.cm}

{\large Noah B. Kerzner$^{1, 2}$, Federico Galeazzi$^{2,3}$ and Rafael I. Nepomechie$^2$}\footnote[1]{Corresponding author: nepomechie@miami.edu}

\vspace{1.cm}
$^1$Department of Physics and Astronomy, 153 Olin Science Bldg.\\
Bucknell University, Lewisburg, PA 17837 USA\\
\vspace{0.5cm}
$^2$Department of Physics, PO Box 248046\\
University of Miami, Coral Gables, FL 33124 USA\\ 
\vspace{0.5cm}
$^3$Coral Gables High School, 450 Bird Rd\\ 
Coral Gables, FL 33146 USA\\

\end{center}

\vspace{.5in}

\begin{abstract}
Dicke states are permutation-invariant superpositions of qubit computational basis states, which play a prominent role in quantum information science. We consider here two higher-dimensional generalizations of these states: $SU(2)$ spin-$s$ Dicke states and $SU(d)$ Dicke states. We present various ways of preparing both types of qudit Dicke states on a qudit quantum computer, using two main approaches: a deterministic approach, based on exact canonical matrix product state representations; and a probabilistic approach, based on quantum phase estimation. The quantum circuits are explicit and straightforward, and are arguably simpler than those previously reported.
\end{abstract}

\end{titlepage}

\setcounter{footnote}{0}

\section{Introduction}\label{sec:intro}

Dicke states \cite{Dicke:1954zz} are permutation-invariant superpositions of qubit computational basis states, for example,
\begin{equation}
|D_{3,2}\rangle =\frac{1}{\sqrt{3}}\left(|0 1 1 \rangle + |1 0 1  \rangle +|1 1 0 \rangle \right) \,. 
\end{equation}
These states play a prominent role in quantum information science, see e.g. the recent review \cite{Marconi:2025ioa}.  Considerable effort has been devoted recently to the preparation of these states on a qubit quantum computer, see e.g. \cite{Bartschi2019, Wang:2021, Buhrman:2023rft, Bond:2023yry, Piroli:2024ckr, Yu:2024szp, Liu:2024taj}.

Considerable attention has also been devoted recently to extending qubit-based quantum computing to higher dimensions, both theoretically \cite{Wang:2020} and experimentally \cite{Ringbauer:2022, Goss:2022, Luo:2022pxs, Lindon:2022ekh, Fischer:2022dbr, Liu:2023otw, Meng:2023uor, Brock:2024vkc}. It is therefore natural to consider higher-dimensional (qudit) generalizations of Dicke states. Two such generalizations are $SU(2)$ spin-$s$ Dicke states and $SU(d)$ Dicke states. An example of the former with $s=1$ (and therefore $2s+1=3$ levels) is
\begin{equation}
    |D^{(1)}_{3, 2}\rangle = \frac{2}{\sqrt{15}}\left(|0 1 1 \rangle + |1 0 1  \rangle +|1 1 0 \rangle \right) +  \frac{1}{\sqrt{15}}\left(|0 0 2 \rangle + |0 2 0  \rangle +|2 0 0 \rangle \right) \,,
    \label{spinsexample}
\end{equation}
with a fixed digit sum (here, 2) in each basis state. An example of the latter with $d=3$ is 
\begin{equation}
    |D^{3}(1,1,1)\rangle = \frac{1}{\sqrt{6}}\left( |0 1 2 \rangle 
    + |0 2 1  \rangle + |1 0 2 \rangle + |1 2 0 \rangle + |2 0 1 \rangle
    + |2 1 0 \rangle
    \right) \,,
    \label{sudexample}
\end{equation}
with one 0, one 1, and one 2 in each basis state. More precise definitions and explanations of notation can be found at the beginning of Secs. \ref{sec:spinsDicke} and \ref{sec:sudDicke}, respectively. Potential applications of such states include quantum error correction \cite{Herbert:2023qgu, Uy:2024ree}, metrology  \cite{Lin:2024utr}, and quantum interferometric imaging \cite{Liu:2024taj}.

Much less effort has been devoted to the preparation of qudit Dicke states than to ordinary (qubit) Dicke states. For $SU(2)$ spin-$s$ Dicke states, a recursive state-preparation algorithm, generalizing the one by B\"artschi and Eidenbenz \cite{Bartschi2019} for qubit ($s=1/2$) Dicke states, was formulated in \cite{Nepomechie:2024fhm}. For $SU(d)$ Dicke states, a recursive state-preparation algorithm was formulated in \cite{Nepomechie:2023lge}; and an algorithm using sorting networks has recently been formulated by Liu, Childs and Gottesman \cite{Liu:2024taj}.

Here we consider further the problem of preparing both types of generalized Dicke states on a qudit quantum computer. We formulate quantum circuits implementing the sequential deterministic preparation \cite{Schon:2005}
of these states based on their recently-found exact matrix product state (MPS) representations \cite{Raveh:2024sku}. We also formulate circuits based on quantum phase estimation (QPE) \cite{Kitaev:1995qy, Nielsen:2019} that prepare these states probabilistically, some of which achieve constant circuit depth, generalizing an approach developed in \cite{Wang:2021, Piroli:2024ckr} for preparing ordinary Dicke states. For simplicity, we restrict our attention to exact state preparation; computational resources can be reduced by dropping this requirement \cite{Piroli:2024ckr}.

The remainder of this paper is organized as follows. In Sec. \ref{sec:spinsDicke}, we consider the preparation of $SU(2)$ spin-$s$ Dicke states, starting with the sequential preparation, and then continuing with the QPE preparation. In Sec. \ref{sec:sudDicke}, we consider the preparation of $SU(d)$ Dicke states, again starting with the sequential preparation, and continuing with the QPE preparation. We briefly discuss these results Sec. \ref{sec:discussion}. Implementations
in cirq \cite{cirq} of all the circuits are available on GitHub \cite{GitHub}.

A summary of our main results, and a comparison with previous work, are presented in Table \ref{table:summary}. 

\begin{table}[ht]
\centering
\begin{tabular}{|c||c|c|c|c|}
\hline\hline
\textbf{Dicke state}
& \textbf{Reference} 
& \textbf{Depth} 
& \textbf{Ancillas [Dimension] }
& \textbf{Repetitions} \\
\hline
\multirow{6}{*}{\shortstack{$SU(2)\ \text{spin-}s$\\ \vspace{0.2cm}
$|D^{(s)}_{n,k}\rangle$}}
     & NRR \cite{Nepomechie:2024fhm}& $\mathcal{O}(s k n)$ & 0 & 1 \\ \cline{2-5}
    & Result \hyperlink{res: 1}{\textbf{1}} & $\mathcal{O}(s k n)$ & 1 $[k+1]$ & 1 \\ \cline{2-5}
    & Result \hyperlink{res: 2}{\textbf{2}} & $\mathcal{O}(\log(s n))$ &  $\mathcal{O}(\log(s n) + n)$ $[2]$ 
    & $\mathcal{O}(\sqrt{s n})$ \\ \cline{2-5}
    & Result \hyperlink{res: 3}{\textbf{3}} & $\mathcal{O}(1)$ & $\mathcal{O}(n)$ $[2sn+1]$ 
    & $\mathcal{O}(\sqrt{s n})$  \\  \cline{2-5} 
    & \multirow{2}{*}{Result \hyperlink{res: 4}{\textbf{4}}} 
    & \multirow{2}{*}{$\mathcal{O}(1)$} 
    & \multirow{2}{*}{\shortstack{$\mathcal{O}(\log(s n))$ $[2]$ \,,\\
    $\mathcal{O}(n\log(s n))$ $[2s+1]$}} 
    & \multirow{2}{*}{$\mathcal{O}(\sqrt{s n})$} \\ 
    & & & & \\ 
\hline \hline
\multirow{7}{*}{\shortstack{$SU(d)$\\ \vspace{0.2cm}
$|D^n(\vec{k})\rangle$}} 
    & NR \cite{Nepomechie:2023lge} & $\mathcal{O}(n^d)$ & 0 & 1 \\ \cline{2-5}
    & LCG \cite{Liu:2024taj} & $\mathcal{O}(\log n )$ & $\mathcal{O}(n\log n + \log d)$ $[2]$ & $\mathcal{O}(1)$ \\ \cline{2-5}
    & \multirow{2}{*}{Result \hyperlink{res: 5}{\textbf{5}}}
    & \multirow{2}{*}{$\mathcal{O}((n/d)^d)$}  
    & \multirow{2}{*}{\shortstack{1 $[2]$ \,, \\
    1 $[ \mathcal{O}((n/d)^d)]$}} 
    &  \multirow{2}{*}{1} \\ 
    & & & & \\ \cline{2-5}
    & Result \hyperlink{res: 6}{\textbf{6}} & $\mathcal{O}(d \log n)$  & $\mathcal{O}(d\log n + n)$ $[2]$
    & $\mathcal{O}(n^{(d-1)/2})$ \\ \cline{2-5}
    & Result \hyperlink{res: 7}{\textbf{7}} & $\mathcal{O}(d)$ & $\mathcal{O}(n+d)$ $[n+1]$ 
    & $\mathcal{O}(n^{(d-1)/2})$ \\ \cline{2-5}
    & \multirow{2}{*}{Result \hyperlink{res: 8}{\textbf{8}}} 
    & \multirow{2}{*}{$\mathcal{O}(1)$}
    & \multirow{2}{*}{\shortstack{$\mathcal{O}(d \log n)$ $[2]$ \,, \\
    $\mathcal{O}(n d\log n)$ $[d]$}}
    & \multirow{2}{*}{$\mathcal{O}(n^{(d-1)/2})$} \\ 
    & & & & \\
\hline\hline
\end{tabular}
\caption{Summary of our results and comparison with previous work.
Note that we report here worst-case values, corresponding to $k \sim s n$ and $\vec{k} \sim (n/d, \ldots, n/d)$ for $|D^{(s)}_{n,k}\rangle$ and $|D^n(\vec{k})\rangle$, respectively; see the respective sections for more comprehensive discussions.}
\label{table:summary}
\end{table}

\section{$SU(2)$ spin-$s$ Dicke states $|D^{(s)}_{n,k}\rangle$}\label{sec:spinsDicke}

\par The ordinary qubit Dicke state can be written as $|D_{n,k}\rangle \propto (\mathbb{S^-})^k|0\rangle^{\otimes n}$, where $\mathbb{S^-}$ is the total spin-lowering operator for a system of $n$ spin-1/2 spins (qubits), which is applied $k$ times on the product state $|0\rangle^{\otimes n}$. A natural higher-dimensional generalization is to consider instead spin-$s$ spins (that is, $(2s+1)$-level qudits, where $s=1/2, 1, 3/2, \ldots$), and define the normalized state $|D^{(s)}_{n,k}\rangle \propto (\mathbb{S^-})^k|0\rangle ^{\otimes n}$, where $\mathbb{S^-}$ is now the total spin-lowering operator for a system of $n$ such spin-$s$ spins. An example with $s=1\,, n=3$ and $k=2$ is given in \eqref{spinsexample}, where $|0\rangle=(1,0, \ldots, 0)^T, \ldots, |2s\rangle= (0, \ldots, 0, 1)^T$ are the standard computational basis states in 
the $(2s+1)$-dimensional complex vector space
$\mathbb{C}^{2s+1}$, and tensor products are understood e.g. $|002\rangle = |0\rangle \otimes|0\rangle \otimes |2\rangle$. Note that the digit sum in each basis state in \eqref{spinsexample} is $k=2$.

Spin-$s$ Dicke states have the closed-form expression  \cite{Nepomechie:2024fhm}
\begin{equation}
|D^{(s)}_{n,k}\rangle = 
\sum_{\substack{m_i = 0, 1, \ldots, 2s \\
m_1 + m_2 + \cdots + m_n = k}} 
\sqrt{
\frac{
\binom{2s}{m_1} \binom{2s}{m_2} \cdots \binom{2s}{m_n}
}{
\binom{2sn}{k}
}
}
\; |m_n \ldots m_2\, m_1\rangle \,, \qquad k = 0, 1, \ldots, 2 s n \,,
\label{spin-s_Dicke_def}
\end{equation}
where $\binom{n}{m}= n!/(m! (n-m)!)$ is the binomial coefficient.
These states are $U(1)$ eigenstates for any allowed value of $s$
\begin{equation}
\mathbb{K}\, |D^{(s)}_{n,k}\rangle = k\, |D^{(s)}_{n,k}\rangle \,,
\label{U1spins}
\end{equation}
 where $\mathbb{K}$ is a Hermitian operator defined by
\begin{equation}
    \mathbb{K} = \sum_{i=1}^n \mathbbm{k}_i \,, \qquad 
    \mathbbm{k}_i = \id \otimes \ldots \id \otimes \overset{i}{\overset{\downarrow}{\mathbbm{k}}}\otimes \id \ldots \otimes \id \,,   
    \qquad 
    \mathbbm{k} = \sum_{m=0}^{2s} m |m\rangle\langle m|\,,   
    \qquad 
    \id = \sum_{m=0}^{2s} |m\rangle\langle m| \,.
    \label{Kdef}
\end{equation}
Hence, $\mathbb{K} = n s - \mathbb{S}^z$, where
$\mathbb{S}^z$ is the $z$-component of the total spin $\vec{\mathbb{S}}$. These states also have the ``duality'' (charge conjugation) property \cite{Nepomechie:2024fhm}
\begin{equation}
    \mathcal{C}^{\otimes n}\,  |D^{(s)}_{n,k}\rangle =  
    |D^{(s)}_{n, 2 s n - k}\rangle \,, \qquad  
    \mathcal{C} =  \sum_{m=0}^{2s} |2s-m\rangle\langle m|\,,
    \label{duality}
\end{equation}
which maps $k \mapsto 2sn-k$.

\par A recursive deterministic approach for preparing spin-$s$ Dicke states was presented in \cite{Nepomechie:2024fhm}.
In this section we present several additional methods of preparing these states: a sequential deterministic approach in Sec. \ref{sec:spinsDickeSeq}, and a probabilistic approach based on QPE in Sec. \ref{sec:spinsDickeQPE}.

\subsection{Sequential preparation}\label{sec:spinsDickeSeq}

\par An exact canonical MPS representation for 
$|D^{(s)}_{n,k}\rangle$ with minimal bond dimension $\chi=k+1$ is given by \cite{Raveh:2024sku}  
\begin{equation}
    |D^{(s)}_{n,k}\rangle = \sum_{\vec m}\langle\underline{k}|A^{m_n}_n\ldots A^{m_2}_2 A^{m_1}_1|\underline{0}\rangle|\vec m\rangle \,,
    \label{MPSspins}
\end{equation}
where $|\vec m\rangle= |m_n \ldots m_2\, m_1\rangle$ as in \eqref{spin-s_Dicke_def}; moreover, $|\underline{0}\rangle, \ldots, |\underline{k}\rangle$ are basis states of an ancilla qudit of dimension $\chi$ (an underline is used here to distinguish $\chi$-dimensional vectors from $(2s+1)$-dimensional vectors), and $A^m_i$ are $(k+1) \times (k+1)$ matrices with elements (for $0 < k < s n$)
\begin{equation}
\langle\underline j'|A_i^{m}|\underline j\rangle=\gamma^{(i)}_{j,m}\,\delta_{j',j+m} \,, \qquad
\gamma^{(i)}_{j,m} = \sqrt{ \dfrac{ \binom{2s(n - i)}{k - j - m} \binom{2s}{m} }{ \binom{2s(n - i + 1)}{k - j} } }\,.
\label{Dickeelemspins}
\end{equation}
Note that $\gamma^{(i)}_{j,m}=0$ if $k - j > 2s(n - i + 1)$, and that the binomial coefficient $\binom{n}{m}$ is defined to be zero if $m \notin [0\,, n]$.

The fact that the MPS is canonical implies \cite{Schon:2005} that we can define a two-qudit unitary operator $U_i$ acting on an ancilla qudit $|\underline{j}\rangle$ and a ``system'' qudit at site $i$ (where $i=1, 2, \ldots, n$) that performs the mapping 
\begin{equation}
    U_i|\underline{j}\rangle|0\rangle_i=\sum_m(A^m_i |\underline{j}\rangle)|m\rangle_i=\sum_{m=0}^{2s} \gamma^{(i)}_{j,m}|\underline{j+m}\rangle|m\rangle_i \,,
    \label{u-gate-su(2)}
\end{equation}
where the second equality follows from \eqref{Dickeelemspins}.
In view of the first equality in \eqref{u-gate-su(2)}
and \eqref{MPSspins}, the state $|D^{(s)}_{n,k}\rangle$
can be prepared sequentially as follows
\begin{equation}
    |\underline{k}\rangle|D^{(s)}_{n,k}\rangle=U_n\ldots U_2U_1|\underline{0}\rangle|0\rangle^{\otimes n} \,.
    \label{sequentialspins}
\end{equation}

In order to implement the sequential preparation \eqref{sequentialspins}, it is necessary to explicitly implement the unitary $U_i$ \eqref{u-gate-su(2)}, whose coefficients $\gamma^{(i)}_{j,m}$ depend on the state $|\underline{j}\rangle$ of the ancilla qudit. To  this end, we decompose $U_i$ into an ordered product of simpler operators ${\rm I}^{(i)}_l$
\begin{equation}
    U_i=\overset{\curvearrowleft}{\prod_{l = \max\left(0,\ 2s(i - n - 1) + k\right)}^{\min\left(2si,\ k\right) - 1}} 
    {\rm I}^{(i)}_l \,,
 \label{spinsU}
\end{equation}
where the product goes from right to left with increasing $l$, 
such that 
\begin{align}
    {\rm I}^{(i)}_l|\underline{j}\rangle|0\rangle_i &=
    \begin{cases}
       \sum_{m=0}^{2s} \gamma^{(i)}_{j,m}|\underline{j+m}\rangle|m\rangle_i 
       & \text{if} \hspace{.2cm } l=j  \\ |\underline{j}\rangle|0\rangle_i 
       & \text{if} \hspace{.2cm } l\ne j
    \end{cases} \,,  \label{eq:spinsIa} \\
{\rm I}^{(i)}_l|\underline{j}\rangle|m\rangle_i &=|\underline{j}\rangle|m\rangle_i  \quad  \text{for }m>0 \quad \text{and} \quad j \le l+m-1 \,. 
    \label{eq:spinsIb}
\end{align}
The latter condition \eqref{eq:spinsIb} ensures that these gates do not interfere
\begin{equation}
     {\rm I}^{(i)}_l \left(  {\rm I}^{(i)}_j |\underline{j}\rangle|0\rangle_i \right)
     =  \left(  {\rm I}^{(i)}_j |\underline{j}\rangle|0\rangle_i \right) \quad \text{for} \quad l > j \,. 
\end{equation}
Although {\it a priori} all ${\rm I}^{(i)}_l$ operators from $l=0$ to $l=k$ could contribute to $U_i$, one can check that only those operators in \eqref{spinsU} can act non-trivially.

The operators ${\rm I}^{(i)}_l$ can be implemented by the quantum circuit
whose circuit diagram is shown in Figure \ref{fig:I_spin-s}. The top wire is a ``system'' qudit of dimension $2s+1$, and the bottom wire is the MPS ancilla qudit of dimension $\chi = k+1$.  The circle $\begin{quantikz}\gateO{\scriptscriptstyle i}\end{quantikz}$ denotes a control on the value $i$. The 1-qudit gate ${\rm X}_d$ and the 2-qudit controlled gate 
${\rm SUM}_d$ (also known as a fan-out gate)
are defined as (see e.g. \cite{Wang:2020}; we remark that 1-qudit and 2-qudit gates are already becoming experimentally feasible, see e.g. \cite{Ringbauer:2022, Goss:2022, Luo:2022pxs, Lindon:2022ekh, Fischer:2022dbr, Liu:2023otw, Meng:2023uor, Brock:2024vkc})
\begin{equation}
    {\rm X}_d\, |x\rangle = |x+1 \rangle\,, \qquad 
    {\rm X}_d^\dagger\, |x\rangle = |x-1 \rangle \,,
\end{equation}
and
\begin{equation}
    {\rm SUM}_d\, |y\rangle|x\rangle = |y + x \rangle|x\rangle\,, \qquad 
    {\rm SUM}_d^\dagger\, |y\rangle|x\rangle = |y - x\rangle|x \rangle \,,
\end{equation}
respectively, where the sums are defined modulo $d$, and here $d=\chi=k+1$. (We use $\begin{quantikz}\gateO{\scriptscriptstyle *}\end{quantikz}$ to denote the control of the ${\rm SUM}_d$ gate, since 
the control qudit can have any value.)
The rotation gate $R_{m,m+1}(\theta_m)$ is defined as
\begin{equation}
R_{m,m+1}(\theta_m) = \exp\left(-\frac{\theta_m}{2} \, (|m\rangle\langle m+1| - |m+1\rangle\langle m|)\right)  \,,
\label{rotation gate}
\end{equation}
where $\theta_m$ is given by
\begin{equation}
     \theta_m= 2 \arccos\left( \frac{\gamma^{(i)}_{l,m}}
     {\prod_{p=0}^{m-1}\sin(\theta_p/2)} \right) \,,
     \qquad m = 0, 1, \ldots, 2s-1\,.
     \label{thetaspins}
\end{equation}
Each rotation gate is controlled by the ancilla value $l+1$, and all operations are assumed to be modulo $\chi$. It is straightforward to check that this circuit satisfies the properties \eqref{eq:spinsIa}, \eqref{eq:spinsIb}. For  $s=1/2$, this circuit reduces to the one presented in the appendix of \cite{Raveh:2024sku}.

\begin{figure}[H]
  \centering
  \scalebox{0.95}{
  \begin{quantikz}[row sep=0.6cm, column sep=0.2cm]
    \lstick{$|x\rangle_i$} 
      & \qw
      & \gateO{*} \vqw{1}
      & \gate{R_{0,1}(\theta_0)} \vqw{1} 
      & \gate{R_{1,2}(\theta_1)} \vqw{1} 
      & \ldots \ldots
      & \gate{R_{2s-1,2s}(\theta_{2s-1})} \vqw{1} 
      & \gateO{*} \vqw{1} 
      & \qw
      & \qw \\
    \lstick{$|\underline{j}\rangle$} 
      & \gate{{\rm X}_{\chi}}
      & \gate{{\rm SUM}_{\chi}^\dagger} 
      & \gateO{{\scriptscriptstyle l+1}} 
      & \gateO{{\scriptscriptstyle l+1}} 
      & \ldots \ldots
      & \gateO{{\scriptscriptstyle l+1}} 
      & \gate{{\rm SUM}_{\chi}}
      & \gate{{\rm X}_{\chi}^{\dagger}}
      & \qw
  \end{quantikz}
  }
  \caption{Circuit diagram for ${\rm I}^{(i)}_l$ defined in \eqref{eq:spinsIa}, \eqref{eq:spinsIb}.}
  \label{fig:I_spin-s}
\end{figure}
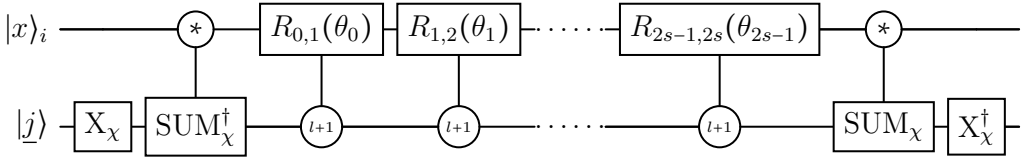

The complete circuit diagram for preparing the state $|D^{(s)}_{n,k}\rangle$ sequentially \eqref{sequentialspins} is shown in Figure \ref{fig:Dicke}. We therefore have the following:

\hypertarget{res: 1}{\noindent{\bf Result 1.}} The state $|D^{(s)}_{n,k}\rangle$ can be prepared deterministically with approximate depth 
$\mathcal{O}(s k n)$ for $k \sim s n$, using one ancilla of dimension $k+1$.

\noindent
This circuit depth is comparable to that of the circuit in \cite{Nepomechie:2024fhm}. For $k\ll s n$, the depth is lower due to the restriction on the $l$-values in the product \eqref{spinsU}.

\begin{figure}[H]
	\centering
	\begin{subfigure}{0.60\textwidth}
      \centering
\begin{adjustbox}{width=1.0\textwidth, raise=2em}
\begin{quantikz}
\lstick{$i$} & \gate[style={fill=red!20}]{} \vqw{1} & \qw \rstick[2, brackets=none]{$\quad\equiv\quad$}\\
\lstick{} & \gate[style={fill=blue!20}]{}  & \qw \\
\end{quantikz}
\begin{quantikz}
\lstick{$i$} & \gate[2]{I^{(i)}_x} & \gate[2]{I^{(i)}_{x+1}} & \quad\ldots\quad
& \gate[2]{I^{(i)}_y} & \qw \\
\lstick{{}} & \qw  & \qw & \quad\ldots\quad & \qw  & \qw \\
\end{quantikz}
\end{adjustbox}
\caption{}
\label{fig:Ui}
    \end{subfigure}%
    \begin{subfigure}{0.4\textwidth}
        \centering
\begin{adjustbox}{width=0.6\textwidth}
\begin{quantikz}
\lstick{$1\ |0\rangle$} & \gate[style={fill=red!20}]{} \vqw{5} & \qw & \qw & \qw & \qw & \qw\\
\lstick{$2\ |0\rangle$} & \qw & \gate[style={fill=red!20}]{} \vqw{4} & \qw & \qw & \qw & \qw\\
\lstick{$3\ |0\rangle$} & \qw & \qw & \gate[style={fill=red!20}]{} \vqw{3} & \qw & \qw & \qw\\
\vdots \\
\lstick{$n\ |0\rangle$} & \qw & \qw & \qw &\quad \ldots\quad & \gate[style={fill=red!20}]{} \vqw{1} & \qw \\
\lstick{$\ |\underline{0}\rangle$} & \gate[style={fill=blue!20}]{}  & \gate[style={fill=blue!20}]{} & \gate[style={fill=blue!20}]{}&\quad \ldots\quad & \gate[style={fill=blue!20}]{} & \qw \\
\end{quantikz}
\end{adjustbox}
\caption{}
\label{fig:calU}
	 \end{subfigure}	
\caption{Circuit diagram for preparing the state $|D^{(s)}_{n,k}\rangle$
sequentially \eqref{sequentialspins}
(a) $U_i=\overset{\curvearrowleft}{\prod}_l I^{(i)}_l$, with $x=\max(0,2s(i - n - 1) + k)$ and $y=\min(2si-1,  k-1)$; 
(b) $\overset{\curvearrowleft}{\prod}_i U_i\, |\underline{0}\rangle|0\rangle^{\otimes n}$ }
\label{fig:Dicke}
\end{figure}
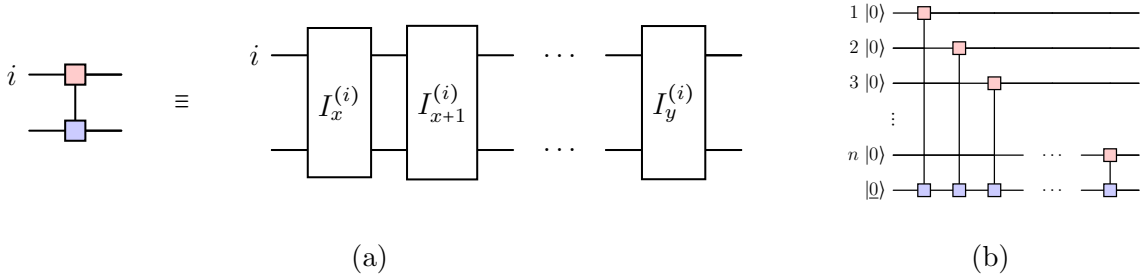

\subsection{QPE preparation}\label{sec:spinsDickeQPE}

\par We have seen in Sec. \ref{sec:spinsDickeSeq} that the sequential preparation of Dicke states $|D^{(s)}_{n,k}\rangle$ is deterministic, 
at the cost of circuit depth that grows linearly with $n$.
We consider here an alternative preparation method that has lower depth, but which is probabilistic. This approach is based on the quantum phase estimation algorithm \cite{Kitaev:1995qy, Nielsen:2019}, which was used in \cite{Wang:2021, Piroli:2024ckr} for preparing qubit ($s=1/2$) Dicke states. 

The two key steps in this approach are: 
\begin{enumerate}
\item constructing a suitable product state that can be expressed as a linear combination of Dicke states $|D^{(s)}_{n,k}\rangle$; and
\item exploiting the $U(1)$ symmetry of these states \eqref{U1spins} to select the desired one.
\end{enumerate}

For the first step, we observe that an $n$-fold tensor product of the 1-qudit state
\begin{equation}
    |\psi(s)\rangle=\frac{1}{2^s}\sum_{m=0}^{2s}\sqrt{\binom{2s}{m}}\, |m\rangle 
    \label{psidef}
\end{equation}
can be expressed as the following linear combination of Dicke states
\begin{equation}
    |\psi(s)\rangle^{\otimes n}=\frac{1}{2^{sn}}\sum_{k=0}^{2sn}\sqrt{\binom{2sn}{k}}\, |D^{(s)}_{n,k}\rangle \,.
    \label{no_boost_init_qpe}
\end{equation}
Indeed, from the definition \eqref{psidef}, it follows that
\begin{align}
    |\psi(s)\rangle^{\otimes n} &=\frac{1}{2^{sn}}
    \sum_{m_i = 0, 1, \ldots, 2s}
\sqrt{\binom{2s}{m_1} \binom{2s}{m_2} \cdots \binom{2s}{m_n}}\, 
|m_n \ldots m_2\, m_1\rangle \nonumber \\
&= \frac{1}{2^{sn}} \sum_{k=0}^{2sn} \sqrt{\binom{2sn}{k}}
\sum_{\substack{m_i = 0, 1, \ldots, 2s \\
m_1 + m_2 + \cdots + m_n = k}} 
\sqrt{
\frac{
\binom{2s}{m_1} \binom{2s}{m_2} \cdots \binom{2s}{m_n}
}{
\binom{2sn}{k}
}
}
\, |m_n \ldots m_2\, m_1\rangle \nonumber \\
&= \frac{1}{2^{sn}}\sum_{k=0}^{2sn}\sqrt{\binom{2sn}{k}}\, |D^{(s)}_{n,k}\rangle \,,    
\end{align}
where the last line follows from the identity \eqref{spin-s_Dicke_def}.

Borrowing a trick from \cite{Piroli:2024ckr}, let us now introduce into the 1-qudit state \eqref{psidef} a variational parameter $0 < p < 1$, which we will tune to boost the probability of preparing a Dicke state with a target value of $k$, see \eqref{qpe_spins-p} below. Hence, we instead make use of the 1-qudit state
\begin{equation}
    |\psi(s, p)\rangle=(1-p)^s\sum_{m=0}^{2s}
    \left(\sqrt{\frac{p}{1-p}}\right)^{m}
    \sqrt{\binom{2s}{m}}\, |m\rangle \,,
    \label{psip}
\end{equation}
which can be similarly shown to satisfy
\begin{equation}
    |\psi(s, p)\rangle^{\otimes n}=\sum_{k=0}^{2sn}
    (\sqrt{p})^k(\sqrt{1-p})^{(2sn-k)}
    \sqrt{\binom{2sn}{k}}\, |D^{(s)}_{n,k}\rangle \,.
      \label{init_qpe}
\end{equation}
The state $|\psi(s,p)\rangle$ \eqref{psip} can be prepared using a product of rotation 
gates \eqref{rotation gate} as follows
\begin{equation}
    |\psi(s, p)\rangle=R_{2s-1,2s}(\theta_{2s-1})\ldots R_{1,2}(\theta_1)R_{0,1}(\theta_0)|0\rangle \,,
    \label{qpe_creation by rotations}
\end{equation}
with the rotation angles
\begin{equation}
    \theta_i=2 \arccos\left( 
    \frac{(1-p)^s\sqrt{\binom{2s}{i} \left(\frac{p}{1-p}\right)^i}}
    {\prod_{j=0}^{i-1}\sin(\theta_j/2)} \right)
    \label{qpe_theta} \,,\qquad i = 0, 1, \ldots 2s-1 \,.
\end{equation}

For the second step, we define the $n$-qudit unitary operator
\begin{equation}
    U = \exp\left( 2 \pi i \mathbb{K}/2^\ell\right) 
     = \prod_{j=1}^n  \exp\left( 2 \pi i \mathbbm{k}_j/2^\ell\right) \,,
    \label{QPEUspins}
\end{equation}
where $\mathbb{K}$ and $\mathbbm{k}_j$ are defined in \eqref{Kdef}, and $\ell$ is still to be determined (see \eqref{ellvalue} below).
The Dicke state $|D^{(s)}_{n,k}\rangle$ is an eigenstate of this operator
\begin{equation}
    U\, |D^{(s)}_{n,k}\rangle 
    = e^{2 \pi i k/2^\ell}\, |D^{(s)}_{n,k}\rangle
\end{equation}
by virtue of \eqref{U1spins}. The QPE circuit (discussed in 
Sec. \ref{sec:spinslogdepth} below) uses the unitary operator \eqref{QPEUspins} to
project the product state \eqref{init_qpe} to the Dicke state $|D^{(s)}_{n,k}\rangle$ with a probability $P(k)$ given by
\begin{equation}
    P(k)=p^k (1-p)^{(2sn-k)} \binom{2sn}{k} \,,
\end{equation}
which is maximized for 
\begin{equation}
    p=\frac{k}{2sn} \,.
    \label{qpe_spins-p}
\end{equation}
The success probability of preparing $|D^{(s)}_{n,k}\rangle$ is therefore given by
\begin{equation}
    P(k) = \frac{(2 s n)!}{(2 s n)^{2 s n}} \frac{k^k}{k!}
    \frac{(2 s n - k)^{(2 s n - k)}}{(2 s n - k)!}
    \approx \sqrt{\frac{2sn}{2 \pi k (2sn - k)}} \,,
    \label{spinsprob}
\end{equation}
where we have used Stirling's approximation. In the worst case $k=s n$, the number of required repetitions is $1/P(k) = \mathcal{O}(\sqrt{s  n})$. For $k \ll s n$, fewer repetitions are needed, since then $1/P(k) = \mathcal{O}(\sqrt{k})$.

\subsubsection{Log depth}\label{sec:spinslogdepth}

The circuit diagram for the standard QPE algorithm is shown in Figure \ref{fig:log-spin-s}.
The bottom wire represents the $n$-qudit product state \eqref{init_qpe}.
There are $\ell$ qubit ancillas, where
$\ell$ is the minimum number of bits $k_i \in \{0, 1\}$ required to represent the maximum possible value of $k = \sum_{i=0}^{\ell-1} k_i 2^i$, namely,
\begin{equation}
\ell=\left\lceil \log_2\left(2sn + 1 \right) \right\rceil \,.
\label{ellvalue}
\end{equation}
The controlled unitaries are controlled versions of the unitary operator \eqref{QPEUspins}.
The state of the system just prior to the measurement is
\begin{equation}
   \sum_{k=0}^{2sn} \sqrt{P(k)} |D^{(s)}_{n,k}\rangle |k\rangle \,,
\end{equation}
where $P(k)$ is given by \eqref{spinsprob}.
The circuit therefore succeeds on measuring the ancilla qubits' base-10 value to be the $k$ of choice. 

\begin{figure}[H]
    \centering
  \begin{quantikz}[row sep=0.6cm, column sep=0.2cm]
   \lstick[4]{$\ell$} 
        &\wireoverride{n} 
        &\wireoverride{n}
         &\wireoverride{n}
          &\wireoverride{n}
    \lstick{$|0\rangle$}       & \gate{H}     & \ctrl{4}       & \qw             & \qw    & \qw     & \qw           & \gate[wires=4]{U_{\rm QFT}^{\dagger}} & \meter{k_0} & \qw \\
     &\wireoverride{n} 
        &\wireoverride{n}
         &\wireoverride{n}
          &\wireoverride{n}
    \lstick{$|0\rangle$}       & \gate{H}     & \qw            & \ctrl{3}          & \qw   & \qw     & \qw           &                                & \meter{k_1} & \qw \\
     &\wireoverride{n} 
        &\wireoverride{n}
         &\wireoverride{n}
          &\wireoverride{n}
    \lstick{$\vdots$}                                                                                          \\
     &\wireoverride{n} 
        &\wireoverride{n}
         &\wireoverride{n}
          &\wireoverride{n}
    \lstick{$|0\rangle$}   & \gate{H}     & \qw            & \qw             & \qw  & \qw      & \ctrl{1}      &                                & \meter{k_{l-1}} & \qw \\
     &\wireoverride{n} 
        &\wireoverride{n}
         &\wireoverride{n}
          &\wireoverride{n}
    \lstick{$|\psi(s,p)\rangle^{\otimes n}$}      & \qw          & \gate{U^{2^0}} & \gate{U^{2^1}}& \qw   & \ldots \ldots       & \gate{U^{2^{\ell-1}}} & \qw                             & \rstick{$|D^{(s)}_{n,k}\rangle$}
    \end{quantikz}    
    \caption{Circuit diagram for preparing the state $|D^{(s)}_{n,k}\rangle$
   in $\log$ depth using the standard QPE algorithm. All ancilla wires are qubits. The initial state of the bottom wire is \eqref{init_qpe},
   and $U$ is defined in \eqref{QPEUspins}.}
    \label{fig:log-spin-s}
\end{figure}
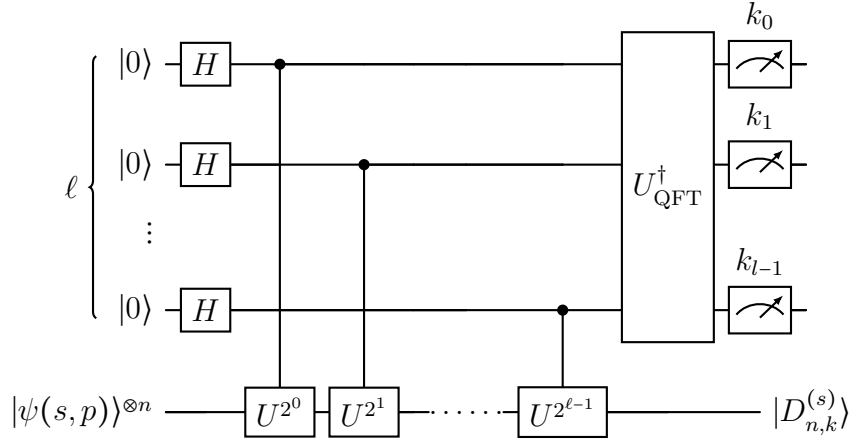

The circuit has $\ell$ controlled unitaries, each of which can be implemented in constant depth using mid-circuit measurement/feedforward and $n$ additional qubit ancillas, see Result 1 in \cite{Piroli:2024ckr}. Hence, the controlled unitaries can be implemented in depth $\mathcal{O}(\ell)$. The inverse quantum Fourier transform $U_{\rm QFT}^\dagger$ can also can be implemented in depth $\mathcal{O}(\ell)$. 
We therefore have the following:

\hypertarget{res: 2}{\noindent{\bf Result 2.}} The state $|D^{(s)}_{n,k}\rangle$ can be prepared probabilistically with at worst 
$\mathcal{O}(\sqrt{s  n})$ repetitions and with depth $\mathcal{O}(\ell) = \mathcal{O}(\log(s n))$, using $\mathcal{O}(\ell + n) = \mathcal{O}(\log(s n) + n)$ qubit ancillas. 

\noindent 
For $s=1/2$, this circuit reduces to Result 3--Proposition 1 in \cite{Piroli:2024ckr} with exact preparation.

\subsubsection{Constant depth}\label{sec:spinsconstantdepth}

Variations of the above circuit can prepare the state $|D^{(s)}_{n,k}\rangle$ in constant depth, at the expense of introducing additional and/or higher-dimensional ancillas.

\par The simplest such scheme, shown in Figure \ref{fig:simple-const-spin-s}, uses the Hadamard test with an auxiliary qudit (top wire) of dimension $d=2sn+1$, which is the number of possible values $k$-values. The gate $H_d$ is the generalized Hadamard gate (see e.g. \cite{Wang:2020})
\begin{equation}
    H_d\, |x\rangle=\frac{1}{\sqrt{d}}\sum_{y=0}^{d-1}  e^{2\pi i x y/d}\, |y\rangle \,,
    \label{Hadamard}
\end{equation}
and the controlled-$\mathcal{U}$ gate is defined as
\begin{equation}
    C\mathcal{U}\, |y\rangle |x\rangle  = 
    \left( \mathcal{U}(x)  |y\rangle \right) |x\rangle \,, \quad
   \mathcal{U}(x) =  \exp\left( 2 \pi i x \mathbb{K}/d \right)
     = \prod_{j=1}^n  \exp\left( 2 \pi i x \mathbbm{k}_j/d\right)
    \,.
    \label{calUdef}
\end{equation}
This gate can be implemented in constant depth using mid-circuit measurement/feedforward and $n$ additional ancilla qudits of dimension $d$, by a generalization of the proof of Result 1 in \cite{Piroli:2024ckr}, see also appendix A in \cite{Zi:2025dgw}. 
We therefore have the following:

\hypertarget{res: 3}{\noindent{\bf Result 3.}} The state $|D^{(s)}_{n,k}\rangle$ can be prepared probabilistically with at worst
$\mathcal{O}(\sqrt{s  n})$ repetitions and
with depth $\mathcal{O}(1)$, using $\mathcal{O}(n)$ ancillas of dimension $2sn+1$.

\noindent 
The scaling of an ancilla dimension with $n$ is an evident shortcoming of this approach.

\begin{figure}[H]
    \centering
    \begin{quantikz}[row sep=0.6cm, column sep=0.4cm]
    \lstick{$|0\rangle$} & \gate{H_d} & \gateO{*} \vqw{1} & \gate{H_d^{\dagger}} & \meter{k} \\
    \lstick{$|\psi(s,p)\rangle^{\otimes n}$} & \qw & \gate{\mathcal{U}} & \qw & \rstick{$|D^{(s)}_{n,k}\rangle$}
\end{quantikz}
    \caption{Circuit diagram for preparing the state $|D^{(s)}_{n,k}\rangle$
    in constant depth using the Hadamard test. The top wire is a qudit of dimension $d=2sn+1$. The initial state of the bottom wire is \eqref{init_qpe}, and $\mathcal{U}$ is defined in \eqref{calUdef}.}
    \label{fig:simple-const-spin-s}
\end{figure}
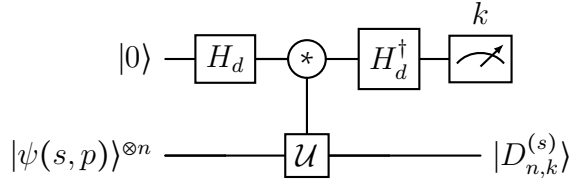

An alternative scheme, adapted from \cite{Piroli:2024ckr}, only requires ancillas of dimensions $2$ and $2s+1$. The initial state \eqref{init_qpe} can be written, with the help of the identity \eqref{spin-s_Dicke_def}, as
\begin{align}
    |\psi(s,p)\rangle^{\otimes n} 
    &= \sum_{k'=0}^{2sn} \sqrt{P(k')} |D^{(s)}_{n,k'}\rangle \nonumber \\
    &=\sum_{k'=0}^{2sn} \sum_{\substack{m_i = 0, 1, \ldots, 2s \\
m_1 + m_2 + \cdots + m_n = k'}} 
\; \alpha_{k',w} |m_n \ldots m_2\, m_1\rangle \,,   \qquad
    \alpha_{k',w} = \sqrt{P(k')} \sqrt{\frac{\binom{2s}{m_1} \binom{2s}{m_2} \cdots \binom{2s}{m_n}
}{\binom{2sn}{k'}}} \nonumber \\
    &=\sum_{k'=0}^{2sn}\, \sum_{w\in S(n,s,k')} \alpha_{k',w} |w\rangle \,, 
    \label{initialalt}
\end{align}
where $S(n,s,k')$ is the set of all permutations $w=m_n \ldots m_2\, m_1$ of $n$ integers, each of which is between 0 and $2s$, and which sum to $k'$; 
and $| w  \rangle =  |m_n \ldots m_2\, m_1\rangle$ is the computational basis state of $n$ qudits corresponding to the permutation $w$.
We fan out $\ell-1$ times the state  $|w\rangle$
(using $n(\ell-1)$ ancilla qudits of dimension $2s+1$, and corresponding qudit fan-out gates) to obtain the state
\begin{equation}
  \sum_{k'=0}^{2sn}\, \sum_{w\in S(n,s,k')} \alpha_{k',w} |w\rangle^{\otimes \ell}
\end{equation}
As shown in Figure \ref{fig:O(1)-spi}, using
$\ell$ qubit ancillas as controls, we then apply a product of controlled gates $V$ defined by
\begin{equation}
    V(x)=\left( |0\rangle\langle 0| \right) \otimes\mathbb{I}
    +\left(|1\rangle\langle 1| \right) 
    \otimes U(x) \,,
    \quad U(x) = \exp\left( 2\pi i(\mathbb{K}-k)/2^x \right) \,,
    \label{Vdef}
\end{equation}
where $\mathbb{K}$ is the $n$-qudit operator
defined in \eqref{Kdef}, and $k$ is the 
$k$-value of the target state $|D^{(s)}_{n,k}\rangle$. The states $|w\rangle$ are eigenstates of $ U(x)$ for any 
$w\in S(n,s,k')$,
\begin{equation}
    U(x)\, |w\rangle = e^{i \theta(x)}\, |w\rangle\,, \qquad 
    \theta(x) = 2\pi i(k'-k)/2^x \,.
\end{equation}
With the help of the identity (for integer values of $k$ and $k'$)
\begin{equation}
    \prod_{x=1}^\ell \left(1+e^{i \theta(x)} \right) = 2^\ell \delta_{k',k} \,,
    \label{identitytheta}
\end{equation}
one can see that the state of the system just prior to measurement is 
\begin{equation}
    \sqrt{P(k)}\, |\underline{0}\rangle^{\otimes n(\ell-1)}|D^{(s)}_{n,k}\rangle |0\rangle^{\otimes \ell}\, + \ldots \,.
\end{equation}
The circuit therefore succeeds on measuring all $\ell$ qubit ancillas to be zero. 

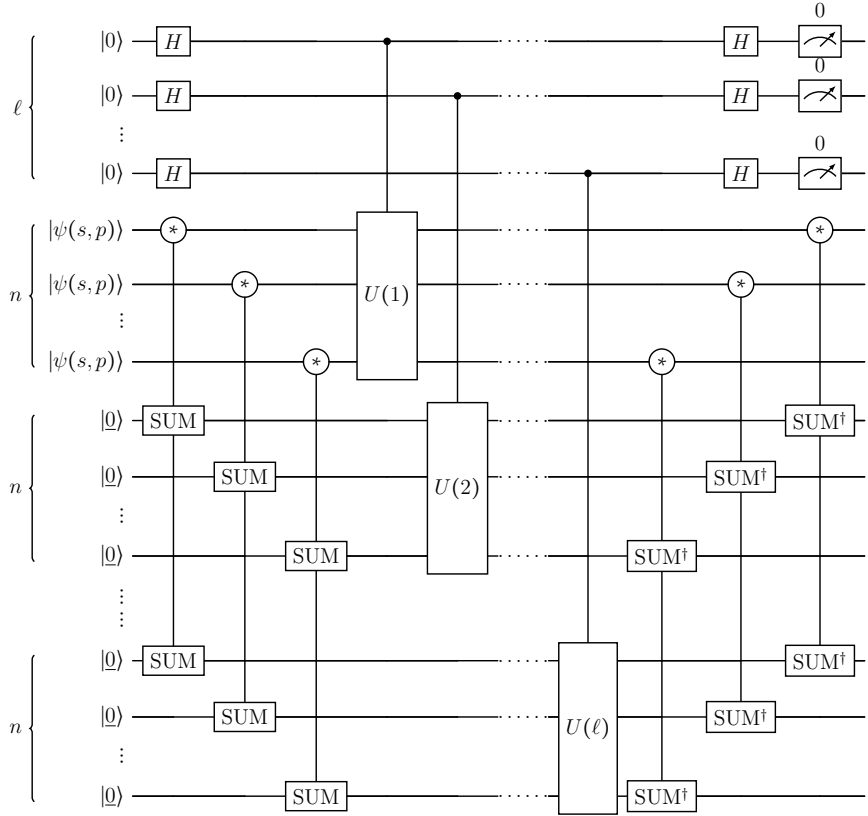
\begin{figure}[H]
    \centering
\begin{adjustbox}{width=0.7\textwidth}
  \begin{quantikz}[row sep=0.45cm, column sep=0.2cm]
   \lstick[4]{$\ell$} 
        &\wireoverride{n} 
        &\wireoverride{n}
         &\wireoverride{n}
          &\wireoverride{n}
           &\wireoverride{n}
          &\wireoverride{n}
          &\wireoverride{n}
           &\wireoverride{n}
          &\wireoverride{n}
    \lstick{$|0\rangle$}       & \gate{H}  & \qw  & \qw     & \ctrl{4}       & \qw             & \ldots\ldots    & \qw     & \qw           & \gate{H} & \meter{0} & \qw \\
    &\wireoverride{n} 
        &\wireoverride{n}
         &\wireoverride{n}
          &\wireoverride{n}
           &\wireoverride{n}
          &\wireoverride{n}
          &\wireoverride{n}
           &\wireoverride{n}
          &\wireoverride{n}
    \lstick{$|0\rangle$}       & \gate{H}  & \qw & \qw   & \qw            & \ctrl{10}          & \ldots\ldots   & \qw     & \qw  &\gate{H}                               & \meter{0} & \qw \\
    &\wireoverride{n} 
        &\wireoverride{n}
         &\wireoverride{n}
          &\wireoverride{n}
           &\wireoverride{n}
          &\wireoverride{n}
          &\wireoverride{n}
           &\wireoverride{n}
          &\wireoverride{n}
    \lstick{$\vdots$}                                                                                          \\
    &\wireoverride{n} 
        &\wireoverride{n}
         &\wireoverride{n}
          &\wireoverride{n}
           &\wireoverride{n}
          &\wireoverride{n}
          &\wireoverride{n}
           &\wireoverride{n}
          &\wireoverride{n}
    \lstick{$|0\rangle$}   & \gate{H}     & \qw            & \qw             & \qw  & \qw     & \ldots \ldots     & \ctrl{11}  & \qw   &\gate{H}                                & \meter{0} & \qw \\
      \lstick[4]{$n$} 
        &\wireoverride{n} 
        &\wireoverride{n}
         &\wireoverride{n}
          &\wireoverride{n}
           &\wireoverride{n}
          &\wireoverride{n}
          &\wireoverride{n}
           &\wireoverride{n}
          &\wireoverride{n}
    \lstick{$|\psi(s,p)\rangle$} &  \gateO{*} \vqw{4}   & \qw          & \qw &  \gate[wires=4]{U(1)}  & \qw   & \ldots \ldots       & \qw   & \qw  & \qw &  \gateO{*} \vqw{4} & \qw  \\ 
     &\wireoverride{n} 
        &\wireoverride{n}
         &\wireoverride{n}
          &\wireoverride{n}
           &\wireoverride{n}
          &\wireoverride{n}
          &\wireoverride{n}
           &\wireoverride{n}
          &\wireoverride{n}
    \lstick{$|\psi(s,p)\rangle$}      & \qw & \gateO{*} \vqw{4}         & \qw   & \qw  & \qw   & \ldots \ldots       & \qw   & \qw & \gateO{*} \vqw{4}  & \qw & \qw  \\ 
     &\wireoverride{n} 
        &\wireoverride{n}
         &\wireoverride{n}
          &\wireoverride{n}
           &\wireoverride{n}
          &\wireoverride{n}
          &\wireoverride{n}
           &\wireoverride{n}
          &\wireoverride{n}
    \lstick{$\vdots$}   \\
        &\wireoverride{n} 
        &\wireoverride{n}
         &\wireoverride{n}
          &\wireoverride{n}
           &\wireoverride{n}
          &\wireoverride{n}
          &\wireoverride{n}
           &\wireoverride{n}
          &\wireoverride{n}
    \lstick{$|\psi(s,p)\rangle$}      & \qw          & \qw & \gateO{*} \vqw{4} & \qw  & \qw   & \ldots \ldots       & \qw  & \gateO{*} \vqw{4} & \qw  & \qw & \qw  \\  
    \lstick[4]{$n$} 
     &\wireoverride{n} 
        &\wireoverride{n}
         &\wireoverride{n}
          &\wireoverride{n}
           &\wireoverride{n}
          &\wireoverride{n}
          &\wireoverride{n}
           &\wireoverride{n}
          &\wireoverride{n}
    \lstick{$|\underline{0}\rangle$} & \gate{\rm SUM} \vqw{6}    & \qw          & \qw   & \qw  &  \gate[wires=4]{U(2)}  & \ldots \ldots       & \qw   & \qw  & \qw   & \gate{\rm SUM^{\dagger}} \vqw{6} & \qw  \\ 
     &\wireoverride{n} 
        &\wireoverride{n}
         &\wireoverride{n}
          &\wireoverride{n}
           &\wireoverride{n}
          &\wireoverride{n}
          &\wireoverride{n}
           &\wireoverride{n}
          &\wireoverride{n}
    \lstick{$|\underline{0}\rangle$}      & \qw    & \gate{\rm SUM} \vqw{6}       & \qw   & \qw  & \qw   & \ldots \ldots       & \qw   & \qw & \gate{\rm SUM^{\dagger}} \vqw{6} & \qw  & \qw  \\ 
     &\wireoverride{n} 
        &\wireoverride{n}
         &\wireoverride{n}
          &\wireoverride{n}
           &\wireoverride{n}
          &\wireoverride{n}
          &\wireoverride{n}
           &\wireoverride{n}
          &\wireoverride{n}
    \lstick{$\vdots$}   \\
     &\wireoverride{n} 
        &\wireoverride{n}
         &\wireoverride{n}
          &\wireoverride{n}
           &\wireoverride{n}
          &\wireoverride{n}
          &\wireoverride{n}
           &\wireoverride{n}
          &\wireoverride{n}
    \lstick{$|\underline{0}\rangle$}      & \qw          & \qw  & \gate{\rm SUM} \vqw{6}  & \qw  & \qw   & \ldots \ldots       & \qw & \gate{\rm SUM^{\dagger}} \vqw{6}   & \qw  & \qw  & \qw \\ 
     &\wireoverride{n} 
        &\wireoverride{n}
         &\wireoverride{n}
          &\wireoverride{n}
           &\wireoverride{n}
          &\wireoverride{n}
          &\wireoverride{n}
           &\wireoverride{n}
          &\wireoverride{n}
    \lstick{$\vdots$}      \\ 
     &\wireoverride{n} 
        &\wireoverride{n}
         &\wireoverride{n}
          &\wireoverride{n}
           &\wireoverride{n}
          &\wireoverride{n}
          &\wireoverride{n}
           &\wireoverride{n}
          &\wireoverride{n}
    \lstick{$\vdots$}      \\
     \lstick[4]{$n$} 
     &\wireoverride{n} 
        &\wireoverride{n}
         &\wireoverride{n}
          &\wireoverride{n}
           &\wireoverride{n}
          &\wireoverride{n}
          &\wireoverride{n}
           &\wireoverride{n}
          &\wireoverride{n}
    \lstick{$|\underline{0}\rangle$}  & \gate{\rm SUM}   & \qw          & \qw   & \qw  & \qw   & \ldots \ldots       &   \gate[wires=4]{U(\ell)}    & \qw & \qw &  \gate{\rm SUM^{\dagger}} & \qw \\ 
     &\wireoverride{n} 
        &\wireoverride{n}
         &\wireoverride{n}
          &\wireoverride{n}
           &\wireoverride{n}
          &\wireoverride{n}
          &\wireoverride{n}
           &\wireoverride{n}
          &\wireoverride{n}
    \lstick{$|\underline{0}\rangle$}      & \qw  & \gate{\rm SUM}         & \qw   & \qw  & \qw   & \ldots \ldots         & \qw  & \qw & \gate{\rm SUM^{\dagger}} & \qw & \qw   \\ 
    &\wireoverride{n} 
        &\wireoverride{n}
         &\wireoverride{n}
          &\wireoverride{n}
           &\wireoverride{n}
          &\wireoverride{n}
          &\wireoverride{n}
           &\wireoverride{n}
          &\wireoverride{n}
    \lstick{$\vdots$}   \\
    &\wireoverride{n} 
        &\wireoverride{n}
         &\wireoverride{n}
          &\wireoverride{n}
           &\wireoverride{n}
          &\wireoverride{n}
          &\wireoverride{n}
           &\wireoverride{n}
          &\wireoverride{n}
    \lstick{$|\underline{0}\rangle$}      & \qw          & \qw  & \gate{\rm SUM}  & \qw  & \qw   & \ldots \ldots      & \qw & \gate{\rm SUM^{\dagger}}  & \qw & \qw & \qw 
\end{quantikz} 
\end{adjustbox}
    \caption{Circuit diagram for preparing the state $|D^{(s)}_{n,k}\rangle$, which can be implemented
    in constant depth. The top $\ell$ wires are qubits, while all other wires are qudits of dimension $2s+1$. The state $|\psi(s,p)\rangle$ is given by \eqref{psip}, and
    $U(x)$ is defined in \eqref{Vdef}.} 
    \label{fig:O(1)-spi}
\end{figure}

The qudit fan-out gates (represented in Figure \ref{fig:O(1)-spi} by ${\rm SUM}$ gates) can be implemented in constant depth (see appendix A in \cite{Zi:2025dgw}), and likewise for the $V(x)$ gates (see Result 1 in \cite{Piroli:2024ckr}).
We therefore have the following:

\hypertarget{res: 4}{\noindent{\bf Result 4.}} The state $|D^{(s)}_{n,k}\rangle$ can be prepared probabilistically with at worst
$\mathcal{O}(\sqrt{s  n})$ repetitions and
with depth $\mathcal{O}(1)$, using $l$ qubit ancillas, and $n(\ell-1)$ qudit ancillas of dimension $2s+1$, where $\ell$ is given by \eqref{ellvalue}.

\noindent
For $s=1/2$, this circuit reduces to Result 3--Proposition 2 in \cite{Piroli:2024ckr} with exact preparation.

\section{$SU(d)$ Dicke states $|D^{n}(\vec k)\rangle$}\label{sec:sudDicke}

Just as ordinary $SU(2)$ Dicke states $|D^{(1/2)}_{n,k}\rangle$
are specified by a fixed number $k$ of $|1\rangle$’s (and therefore $n-k$ $|0\rangle$'s), $SU(d)$ Dicke states are characterized by a fixed vector $\vec k$ of occupation numbers for each of $d$ levels -- that is, a specified number of $d$-level qudits occupying each level. Given $\vec k$, an $SU(d)$ Dicke state is constructed as a uniform superposition over all computational basis states that match the specified occupation numbers.

\par More explicitly, let $\vec k=(k_0,k_1,\ldots,k_{d-1})$ be a vector of $d$ integers, each of which is between $0$ and $n$, and which sum to $n$ (that is, 
$k_i \in \{0, 1, \ldots, n\}$, and $\sum_{i=0}^{d-1} k_i=n$). The corresponding 
$n$-qudit $SU(d)$ Dicke state $ |D^n(\vec k)\rangle$ is defined by
\begin{equation}
    |D^n(\vec k)\rangle=\frac{1}{\sqrt{\binom{n}{\vec k}}}\sum_{w\in\mathfrak{S}_{M(\vec k)}}|w\rangle \,,
    \label{SU(d) Dicke}
\end{equation}
where $\mathfrak{S}_{M(\vec k)}$ is the set of all permutations of the multiset $M(\vec k)$ 
\begin{equation}
M(\vec k)	=\{ \underbrace{0, \ldots, 0}_{k_{0}}, \underbrace{1, 
\ldots, 1}_{k_{1}}, \ldots, \underbrace{d-1, \ldots, d-1}_{k_{d-1}}\} \,,
\label{multiset}
\end{equation}
such that $k_i$ is the multiplicity of $i$ in $M(\vec k)$, and the cardinality of $M(\vec k)$ is $n$; 
and $| w  \rangle$ is the computational basis state of $n$ qudits corresponding 
to the permutation $w$. Furthermore, ${n \choose \vec k}$ denotes the multinomial
\begin{equation}
{n \choose \vec k} = {n \choose k_{0}, k_{1}, \ldots, k_{d-1}}	= 
\frac{n!}{\prod_{i=0}^{d-1}k_{i}!} \,.
\label{multinomial}
\end{equation}
An example with $\vec k=(1,1,1)$, so that $d=3$ (qutrits) and $n=3$, is 
given by \eqref{sudexample}.

\par Approaches for preparing $SU(d)$ Dicke states were presented in 
\cite{Nepomechie:2023lge, Liu:2024taj}. In this section we present several additional methods of preparing these states: a sequential deterministic approach in Sec. \ref{sec:sudDickeSeq}, and a probabilistic approach based on QPE in Sec. \ref{sec:sudDickeQPE}.

\subsection{Sequential preparation}\label{sec:sudDickeSeq}

\par An exact canonical MPS representation for $|D^n(\vec k)\rangle$
was derived in \cite{Raveh:2024sku}. The basis for the MPS ancilla consists of \emph{level sets} $\mathcal{A}^i(\vec{k})$ defined by
\begin{equation}
    \mathcal{A}^i(\vec{k}) = \big\{ \vec{a} = (a_0, a_1, \ldots, a_{d-1})\, \big\vert\quad 0 \le a_j \le k_j \,, \quad \sum_{j=0}^{d-1} a_j = i \big\} \,, \qquad i = 0, 1, \ldots, n \,.
    \label{setA}
\end{equation}
The elements $\vec{a} \in \mathcal{A}^i(\vec{k})$ are labeled (indexed) by consecutive integers $J^i(\vec{a}) = 0, 1, \ldots, \mathcal{D}^i(\vec{k})-1 $, where $\mathcal{D}^i(\vec{k})=|\mathcal{A}^i(\vec{k})|$ is the cardinality of $\mathcal{A}^i(\vec{k})$. Here we order each level set in \emph{reverse lexicographic order}, which is central for our construction, see appendix \ref{sec:proof}. Hence, for $\vec{x}, \vec{y} \in  \mathcal{A}^i(\vec{k})$, we assign their labels such that
\begin{equation}
J^i(\vec{x}) < J^i(\vec{y}) \quad \iff \quad \vec{x} >_{\text{lex}} \vec{y},
\end{equation}
where lexicographic order compares vectors from left to right. Thus, $\vec{x}=(x_0, x_1, \ldots, x_{d-1}) >_{\text{lex}}\ \vec{y}=(y_0, y_1, \ldots, y_{d-1})$ if $x_j > y_j$ for the first $j$ where $x_j \ne y_j$. 

For example, for $\vec{k}=(1,1,1)$ so that $d=3$, the level set $\mathcal{A}^1(\vec{k})$ is given by
\begin{equation}
    \mathcal{A}^1(\vec{k})  =  \left\{ (1,0,0)\,,  (0, 1, 0)\,,  (0, 0, 1) \right\}
     = \{ \hat{0}, \hat{1}, \hat{2} \} \,,
    \label{exampleA1}
\end{equation}
where $\hat{m}$ is a $d$-dimensional unit vector in the $m$-th direction; 
that is, it has components $(\hat{m})_i =\delta_{m,i}$, with $m=0, 1, \dots, d-1$.
The elements of $\mathcal{A}^1(\vec{k})$ are ordered in \eqref{exampleA1} in 
reverse lexicographic order, and have labels $J^1(\hat{m})=m$ for $m= 0, 1, 2$.

A canonical MPS with minimum bond dimension $\chi=\mathcal{D}^{\lfloor n/2 \rfloor}(\vec{k})$ is given by \cite{Raveh:2024sku}
\begin{equation}
|D^n(\vec{k})\rangle=\sum_{\vec m}\langle \underline{0}|A_n^{m_n}\dots A_2^{m_2}A_1^{m_1}|\underline 0\rangle\,|\vec m\rangle\,,
\label{quditDickeMPSspins}
\end{equation} 
where $A_i^m$ are $\chi \times \chi$  matrices with elements 
\begin{equation}
    \langle \underline{J^i(\vec{a'})}| A_i^m | \underline{J^{i-1}(\vec{a})} \rangle = \gamma^{(i)}_{J^{i-1}(\vec{a}),m}\, \delta_{\vec{a'},\vec{a}+\hat{m}} \,,
    \qquad \gamma^{(i)}_{J^{i-1}(\vec{a}),m} = \sqrt{\frac{\binom{n-i}{\vec{k}-\vec{a}-\hat{m}}} {\binom{n-i+1}{\vec{k}-\vec{a}}}} \,,
    \label{Amatelemsqudit}
\end{equation}
where $\vec{a} \in \mathcal{A}^{i-1}(\vec{k})$ and $\vec{a'} \in \mathcal{A}^{i}(\vec{k})$. The coefficient $\gamma^{(i)}_{J^{i-1}(\vec{a}),m}$ is zero unless $\vec{a} + \hat{m} \in \mathcal{A}^i(\vec{k})$.

The fact that the MPS is canonical implies \cite{Schon:2005} that we can define a two-qudit unitary operator $U_i$ acting on an ancilla qudit $|\underline{J^{i-1}(\vec{a})}\rangle$ (for $\vec{a} \in \mathcal{A}^{i-1}(\vec{k})$)
and a ``system'' qudit at site $i \in \{1, 2, \ldots, n\}$ that performs the mapping
\begin{equation}
U_i\, |\underline{J^{i-1}(\vec{a})}\rangle\, |0\rangle_i 
= \sum_{m=0}^{d-1} \left( A^m_i\, |\underline{J^{i-1}(\vec{a})}\rangle \right)
|m\rangle_i 
= \sum_{m=0}^{d-1} \gamma^{(i)}_{J^{i-1}(\vec{a}),m}\,  
|\underline{J^{i}(\vec{a} + \hat{m})}\rangle\, |m\rangle_i  \,.
\label{u-gate-su(d)}
\end{equation}
It follows that the state $|D^{n}(\vec k)\rangle$ can be prepared sequentially as follows
\begin{equation}
    |\underline{0}\rangle|D^{n}(\vec k)\rangle=U_n\ldots U_2\, U_1|\underline{0}\rangle |0\rangle^{\otimes n} \,.
    \label{seqsud}
\end{equation}. 

In order to formulate a circuit implementation of the sequential preparation \eqref{seqsud}, it is necessary to devise a circuit for the unitary $U_i$ 
\eqref{u-gate-su(d)}. Proceeding as in Sec. \ref{sec:spinsDickeSeq},
we decompose $U_i$ into an ordered product of simpler operators 
${\rm I}^{(i)}_{J^{i-1}(\vec{a})}$
\begin{equation}
    U_i=\overset{\curvearrowleft}{\prod_{\vec{a} \in \mathcal{A}^{i-1}(\vec{k})}} 
    {\rm I}^{(i)}_{J^{i-1}(\vec{a})} \,,
 \label{sudU}
\end{equation}
where the product goes from right to left with increasing label $J^{i-1}(\vec{a})$, 
such that 
\begin{numcases}
    {{\rm I}^{(i)}_p |\underline{J^{i-1}(\vec{a}}\rangle|0\rangle_i =}
       \sum_{m=0}^{d-1} \gamma^{(i)}_{J^{i-1}(\vec{a}),m}
       |\underline{J^{i}(\vec{a} + \hat{m})}\rangle|m\rangle_i 
       & if \hspace{.2cm } $p=J^{i-1}(\vec{a})$ \label{eq:sudIa} \\ 
      |\underline{J^{i-1}(\vec{a}}\rangle|0\rangle_i
       & if \hspace{.2cm } $p < J^{i-1}(\vec{a})$ \label{eq:sudIb}
\end{numcases}
and
\begin{equation}
{\rm I}^{(i)}_p \left( {\rm I}^{(i)}_{J^{i-1}(\vec{a})} |\underline{J^{i-1}(\vec{a}}\rangle
|0\rangle_i \right) 
= \left( {\rm I}^{(i)}_{J^{i-1}(\vec{a})} |\underline{J^{i-1}(\vec{a}}\rangle
|0\rangle_i \right)
\qquad\qquad\qquad  \text{if }p > J^{i-1}(\vec{a}) \,. 
    \label{eq:sudIc}
\end{equation}

The operators ${\rm I}^{(i)}_{J^{i-1}(\vec{a})}$ can be implemented by the quantum circuit whose circuit diagram is shown in Figure \ref{fig:I_SU(d)-expanded}. The top wire is a ``system'' qudit of dimension $d$, and the middle wire is the MPS ancilla qudit of dimension $\chi$. In contrast with the corresponding $SU(2)$ spin-$s$ circuit in Figure \ref{fig:I_spin-s}, there is an additional (bottom) wire representing a qubit ancilla. The level sets $\mathcal{A}^{i}(\vec{k})$, the corresponding labels $J^{i}(\vec{a})$ and their inverses 
$\left(J^{i}\right)^{-1} \in \mathcal{A}^i(\vec{k})$
are computed classically.

In Figure \ref{fig:I_SU(d)-expanded},
for the case that both $x=0$ and $j=J^{i-1}(\vec{a})$, corresponding to the first condition \eqref{eq:sudIa}, the state of the qubit ancilla is flipped to $|1\rangle$ by the double-controlled NOT; the controlled rotation gates \eqref{rotation gate} with angles
\begin{equation}
     \theta_m= 2 \arccos\left( \frac{\gamma^{(i)}_{J^{i-1}(\vec{a}),m}}
     {\prod_{p=0}^{m-1}\sin(\theta_p/2)} \right) \,,
     \qquad m = 0, 1, \ldots, d-2 \,,
     \label{thetasud}
\end{equation}
generate the coefficients $\gamma^{(i)}_{J^{i-1}(\vec{a}),m}$ in \eqref{eq:sudIa}.
After the rotations, the state of the MPS ancilla qudit is mapped to 
$|\underline{J^{i}(\vec{a} + \hat{m})}\rangle$ for $m=0, 1, \ldots, d-1$ by a series of double-controlled-$X^{i,j}$ gates, where the 1-qudit gate $X^{i,j}$ is defined as
\begin{equation}
    X^{i,j}\, |i\rangle = |j\rangle \,, \qquad 
    X^{i,j}\, |j\rangle = |i\rangle \,.
\end{equation}
Finally, the qubit ancilla is reset to $|0\rangle$ by a series of double-controlled 
NOTs.

\begin{figure}[H]
  \centering
  \scalebox{0.95}{
  \begin{tabular}{c}
  \begin{quantikz}[row sep=0.2cm, column sep=0.2cm]
    \lstick{$|x\rangle_i$} 
    & \gateO{0} \vqw{1} 
    & \gate{\scriptstyle R_{0,1}(\theta_0)} \vqw{2} 
    & \ldots \ldots
    & \gate{\scriptstyle R_{d-2,d-1}(\theta_{d-2})} \vqw{1} 
    & \gateO{0} \vqw{1} 
    & \qw & \ldots \\
    
    \lstick{$|\underline{j}\rangle$} 
    & \gateO{\scriptstyle J^{i{-}1}(\vec{a})} \vqw{1} 
    & \qw 
    & \ldots \ldots
    & \qw \vqw{1}
    & \gate{\scriptstyle X^{J^{i{-}1}(\vec{a}),\,J^{i}(\vec{a}+\widehat{0})}} \vqw{1}
    & \qw & \ldots\\

    \lstick{$|0\rangle$} 
    & \gate{X}
    & \control{}
    & \ldots \ldots
    & \control{}
    & \control{}
    & \qw & \ldots
  \end{quantikz}
  \\
  \\
  \begin{quantikz}[row sep=0.5cm, column sep=0.3cm]
    & \ldots \ldots
    & \gateO{\scriptstyle d{-}1} \vqw{1} 
    & \gateO{0} \vqw{1} 
    & \ldots \ldots
    & \gateO{\scriptstyle d{-}1} \vqw{1} 
    & \qw \\

    
    & \ldots \ldots
    & \gate{\scriptstyle X^{J^{i{-}1}(\vec{a}),\,J^{i}(\vec{a}+\widehat{d-1})}} \vqw{1}
    & \gateO{\scriptstyle J^{i}(\vec{a}+\widehat{0})} \vqw{1} 
    & \ldots \ldots
    & \gateO{\scriptstyle J^{i}(\vec{a}+\widehat{d-1})} \vqw{1} 
    & \qw \\

    
    & \ldots \ldots
    & \control{}
    & \gate{X}
    &\ldots \ldots
    & \gate{X}
    & \qw
  \end{quantikz}
  \end{tabular}
  }
  \caption{Circuit diagram for $I^{(i)}_{J^{i-1}(\vec{a})}$}
  \label{fig:I_SU(d)-expanded}
\end{figure}

The second condition \eqref{eq:sudIb} requires $I^{(i)}_{J^{i-1}(\vec{a})} 
|\underline{j}\rangle|0\rangle_i = |\underline{j}\rangle|0\rangle_i$ for 
$j > J^{i-1}(\vec{a})$. In order for the $I^{(i)}_{J^{i-1}(\vec{a})}$ circuit in Figure \ref{fig:I_SU(d)-expanded} to satisfy this condition, it is necessary (to avoid triggering the double-controlled-NOT with the control value $J^{i}(\vec{a} + \hat{0})$
near the end of the circuit) that 
\begin{equation}
j > J^{i-1}(\vec{a}) \Rightarrow j \ne J^{i}(\vec{a} + \hat{0}) \,,
    \label{needed}
\end{equation}
which is guaranteed by our labeling of the level sets in reverse lexographic order. Indeed, we exploit this ordering to
show in appendix \ref{sec:proof} that
\begin{equation}
    J^{i-1}(\vec{a}) \ge  J^{i}(\vec{a}+\hat{0}) 
    \label{proved}
\end{equation}
for all $\vec{a} \in \mathcal{A}^{i-1}(\vec{k})$, from which \eqref{needed} follows.

Finally, it is straightforward to check that the third condition \eqref{eq:sudIc},
which ensures that these gates do not interfere,
is also satisfied by the circuit in Figure \ref{fig:I_SU(d)-expanded}.
The full circuit for the sequential preparation \eqref{seqsud},
including all $U_i$ operators, has the same structure as in Figure \ref{fig:calU}.

We see from \eqref{sudU} that each $U_i$ is made of 
$\mathcal{D}^{i-1}(\vec{k})$-many ${\rm I}^{(i)}$-operators; and we see from Figure \ref{fig:I_SU(d)-expanded} that each ${\rm I}^{(i)}$-operator is made of $3d$ gates. Hence, the total circuit depth is $3d\sum_{i=1}^n \mathcal{D}^{i-1}(\vec{k})$. Although a closed-form expression for $\mathcal{D}^{i-1}(\vec{k})$ is not known, it has the ``stars and stripes'' bound 
$\mathcal{D}^{i-1}(\vec{k}) \le \binom{i+d-2}{i-1}$, which leads to an approximate circuit depth $\mathcal{O}((n/d)^d)$. A similar result is obtained by considering the worst case $\vec{k}=(\frac{n}{d},\frac{n}{d},\ldots,\frac{n}{d})$, for which $\sum_{i=1}^n \mathcal{D}^{i-1}(\vec{k}) \approx (\frac{n}{d} +1)^d$. 
We therefore have the following:

\hypertarget{res: 5}{\noindent{\bf Result 5.}} The state $|D^{n}(\vec k)\rangle$ can be prepared deterministically with a worst-case approximate depth $\mathcal{O}((n/d)^d)$, 
using one qubit ancilla, and one
ancilla of dimension $\chi=\mathcal{D}^{\lfloor n/2 \rfloor}(\vec{k}) \le 
\binom{{\lfloor n/2 \rfloor}+d-1}{{\lfloor n/2 \rfloor}}$.

\noindent
The circuit depth is significantly smaller for typical $\vec{k}$-values. For example, for $\vec{k}$ of the form
\begin{equation}
    \vec{k} = (n - r x, \overbrace{x, \ldots,x}^r, 
    \overbrace{0, \ldots,0}^{d-r-1}) \,,
    \qquad 0 < r\le d-1\,, \qquad 0 < x \le \frac{n}{r+1}\,,
    \label{better}
\end{equation}
or any permutation thereof, we find that $\chi \le (x+1)^r$, implying an approximate depth $\mathcal{O}(x^r n)$.

\subsection{QPE preparation}\label{sec:sudDickeQPE}

We now consider a probabilistic approach of preparing the Dicke states 
$|D^n(\vec k)\rangle$ that is based on the quantum phase estimation algorithm \cite{Kitaev:1995qy, Nielsen:2019}. As in the case of $SU(2)$ spin-$s$ Dicke states discussed in Sec. \ref{sec:spinsDickeQPE}, there are two key steps:
\begin{enumerate}
    \item constructing a suitable product state that can be expressed as a linear combination of Dicke states $|D^n(\vec k)\rangle$; and
    \item exploiting the $U(1)^{\otimes (d-1)}$ symmetry of these states (see \eqref{U1sud} below) to select the desired one.
\end{enumerate}

\par For the first step, we observe that the $n$-fold tensor product of the 1-qudit state
\begin{equation}
    |\psi(d)\rangle = \frac{1}{\sqrt{d}} \sum_{m=0}^{d-1}|m\rangle 
    \label{sudqpestate0}
\end{equation}
can be expressed as the following linear combination of Dicke states
\begin{equation}
    |\psi(d)\rangle^{\otimes n} = \frac{1}{d^{\frac{n}{2}}}
    \sum_{\substack{k_i =0, 1, \ldots, n \\
k_0 + k_1 + \ldots + k_{d-1} = n}} \sqrt{\binom{n}{\vec k}}\, |D^n(\vec k)\rangle \,.
\end{equation}
Indeed, 
\begin{align}
   |\psi(d)\rangle^{\otimes n} &= \frac{1}{d^{\frac{n}{2}}} 
   \sum_{m_j = 0, \ldots, d-1}\, |m_{n} \ldots m_{1} \rangle \nonumber \\
   &= \frac{1}{d^{\frac{n}{2}}} \sum_{\substack{k_i =0, \ldots, n \\
k_0 + \ldots + k_{d-1} = n}}\, 
\sum_{w\in\mathfrak{S}_{M(\vec k)}}|w\rangle \nonumber \\
&= \frac{1}{d^{\frac{n}{2}}}
    \sum_{\substack{k_i =0, \ldots, n \\
k_0 + \ldots + k_{d-1} = n}} \sqrt{\binom{n}{\vec k}}\, 
|D^n(\vec k)\rangle \,.
\label{sudqpeproof}
\end{align}
In passing to the second line of \eqref{sudqpeproof}, we used the fact 
\begin{align}
    |w\rangle &= |m_{n} \ldots m_{1} \rangle \quad \text{with} \quad
    m_j \in \{0, \ldots, d-1 \} \quad \text{for all} \quad j=1, \ldots, n \\ 
    & \iff w \in\mathfrak{S}_{M(\vec k)} \quad \text{with} \quad 
    k_i \in \{0, \ldots, n \} \quad \text{for all} \quad i=0, \ldots, d-1
    \quad \text{and} \quad \sum_{i=0}^{d-1} k_i = n \,, \nonumber
\end{align}
where the multiset $M(\vec k)$ is defined in \eqref{multiset}; and the third line of \eqref{sudqpeproof} follows from the definition \eqref{SU(d) Dicke} of the Dicke state $|D^n(\vec k)\rangle$.

To boost the probability of preparing a Dicke state with a target $\vec{k}$-value, we introduce $d$ variational parameters $0 < \xi_i < 1$ into the 1-qudit state \eqref{sudqpestate0}
\begin{equation}
    |\psi(d, \vec{\xi})\rangle = \frac{1}{\sqrt{\vec{\xi} \cdot \vec{\xi}}} \sum_{m=0}^{d-1}\xi_m\, |m\rangle \,,
    \label{sudqpestate}
\end{equation}
which can be similarly shown to satisfy
\begin{equation}
    |\psi(d, \vec{\xi})\rangle^{\otimes n} = 
   \frac{1}{(\vec{\xi} \cdot \vec{\xi})^{\frac{n}{2}}}
    \sum_{\substack{k_i =0, 1, \ldots, n \\
k_0 + k_1 + \ldots + k_{d-1} = n}} \left(\prod_{i=0}^{d-1}\xi_i^{k_i}\right) 
\sqrt{\binom{n}{\vec k}}\, |D^n(\vec k)\rangle \,.
\label{qpesudprod}
\end{equation}
The 1-qudit state \eqref{sudqpestate} can be prepared using a product of rotation gates similarly to  \eqref{qpe_creation by rotations}.

For the second step, we define the $d-1$ Hermitian and mutually-commuting operators 
$\mathbb{K}^{(1)}, \ldots, \mathbb{K}^{(d-1)}$ as follows
\begin{equation}
    \mathbb{K}^{(i)} = \sum_{j=1}^n \mathbbm{k}_j^{(i)} \,, \quad 
    \mathbbm{k}^{(i)}_j = \id \otimes \ldots \id \otimes \overset{j}{\overset{\downarrow}{\mathbbm{k}^{(i)}}}\otimes \id \ldots \otimes \id \,,   
    \quad 
    \mathbbm{k}^{(i)} =  |i\rangle\langle i|\,,   
    \quad 
    \id = \sum_{m=0}^{d-1} |m\rangle\langle m| \,, 
    \quad i = 1, \ldots, d-1 \,.
    \label{Kidef}
\end{equation}
The Dicke states $|D^{n}(\vec{k})\rangle$ are simultaneous eigenstates of all these operators
\begin{equation}
\mathbb{K}^{(i)}\, |D^{n}(\vec{k})\rangle = k_i \,  |D^{n}(\vec{k})\rangle \,,
\qquad i = 1, \ldots, d-1 \,.
\label{U1sud}
\end{equation}
We define the corresponding unitary operators
\begin{equation}
    U^{(i)} = \exp \left( 2 \pi i \mathbb{K}^{(i)}/2^\ell\right) 
    = \prod_{j=1}^n \exp \left( 2 \pi i \mathbbm{k}^{(i)}_j/2^\ell\right) \,,
    \qquad i = 1, \ldots, d-1 \,,
    \label{QPEUsud}
\end{equation}
where $\ell$ is still to be determined (see \eqref{ellvaluesud} below), of which the Dicke states are simultaneous eigenstates
\begin{equation}
U^{(i)}\, |D^{n}(\vec{k})\rangle = e^{2 \pi i k_i/2^\ell} \,  |D^{n}(\vec{k})\rangle \,,
\qquad i = 1, \ldots, d-1 \,.
\end{equation}
The QPE circuit (discussed in Sec. \ref{sec:sudlogdepth}
below) uses the unitary operators \eqref{QPEUsud} to
project the product state \eqref{qpesudprod} to the Dicke state 
$|D^{n}(\vec{k})\rangle$ with a probability $P(\vec{k})$ given by
\begin{equation}
    P(\vec{k})= \frac{1}{(\vec{\xi} \cdot \vec{\xi})^n}   \left(\prod_{i=0}^{d-1}\xi_i^{2k_i}\right) \binom{n}{\vec{k}} \,,
\end{equation}
which is maximized for 
\begin{equation}
    \xi_i=\sqrt{\frac{k_i}{n}} \,, 
    \qquad i = 0, 1, \ldots. d-1 \,.
    \label{qpe_sudxi}
\end{equation}
The success probability of preparing $|D^{n}(\vec{k})\rangle$ is therefore given by
\begin{equation}
    P(\vec{k}) = \frac{n!}{n^{n}} \prod_{i=0}^{d-1} \frac{k_i^{k_i}}{k_i !}
    \approx \sqrt{\frac{n}{(2\pi)^{d-1} \prod_{\substack{i=0\\
    k_i \ne 0}}^{d-1}k_i}} \,.
    \label{sudprob}
\end{equation}
In the worst case $\vec{k}=(\frac{n}{d},\frac{n}{d},\ldots,\frac{n}{d})$, the number of required repetitions is $1/P(\vec{k}) \approx \mathcal{O}(n^{(d-1)/2})$. For typical $\vec{k}$-values, significantly fewer repetitions are needed.
For example, for the case \eqref{better}, $1/P(\vec{k}) \approx \mathcal{O}(x^{r/2})$.

\subsubsection{Log depth}\label{sec:sudlogdepth}

The circuit diagram for the standard QPE approach is shown in Figure \ref{fig:SU(d)-qpe}. The bottom wire represents the $n$-qudit product state \eqref{qpesudprod}. There are $(d-1)\ell$ qubit ancillas, where $\ell$ is the number of bits of $n$ (recall that $0 \le  k_i \le n$), namely,
\begin{equation}
    \ell=\lceil\log_2(n+1)\rceil \,.
    \label{ellvaluesud}
\end{equation}
The controlled unitaries are controlled versions of the unitary operators \eqref{QPEUsud}. The state of the system just prior to the measurement is
\begin{equation}
    \sum_{\substack{k_i =0, 1, \ldots, n \\
k_0 + k_1 + \ldots + k_{d-1} = n}} \sqrt{P(\vec{k})}
\, |D^n(\vec k)\rangle |k_{d-1} \ldots k_1\rangle \,,
\end{equation}
where $P(\vec{k})$ is given by \eqref{sudprob}. The circuit therefore succeeds on measuring the ancilla qubits’ base-10 values to be those of the target $\vec{k}$.

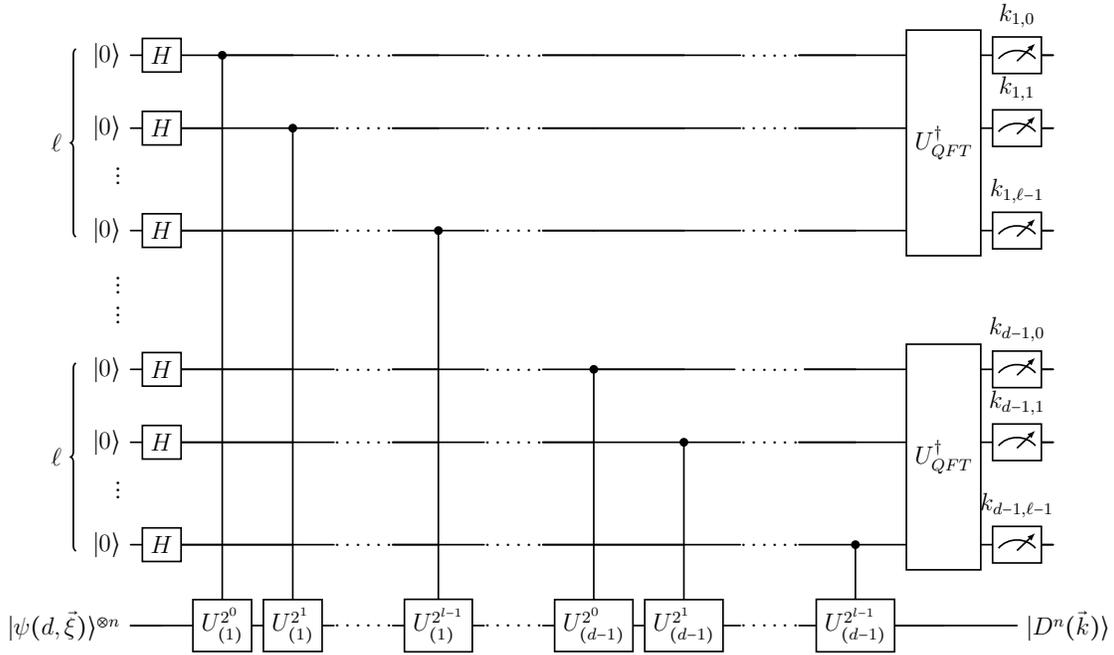
\begin{figure}[H]
    \centering
\begin{adjustbox}{width=0.9\textwidth}
  \begin{quantikz}[row sep=0.5cm, column sep=0.2cm]
  \lstick[4]{$\ell$} 
        &\wireoverride{n} 
        &\wireoverride{n}
         &\wireoverride{n}
         &\wireoverride{n}
    \lstick{$|0\rangle$}       
    & \gate{H}     
    & \ctrl{10}                   
    & \qw  
    & \ldots \ldots    
    & \qw     
    & \ldots \ldots            
    & \qw             
    & \qw  
    & \ldots \ldots  
    & \qw       
    & \gate[wires=4]{U_{QFT}^{\dagger}} 
    & \meter{k_{1,0}} 
    & \qw \\
     &\wireoverride{n} 
        &\wireoverride{n}
         &\wireoverride{n}
         &\wireoverride{n}
    \lstick{$|0\rangle$}       
    & \gate{H}     
    & \qw            
    & \ctrl{9}          
    & \ldots \ldots    
    & \qw  
    & \ldots \ldots                   
    & \qw 
    & \qw 
    & \ldots \ldots     
    & \qw           
    &                                
    & \meter{k_{1,1}} 
    & \qw \\
     &\wireoverride{n} 
        &\wireoverride{n}
         &\wireoverride{n}
         &\wireoverride{n}
    \lstick{$\vdots$}  \\
     &\wireoverride{n} 
        &\wireoverride{n}
         &\wireoverride{n}
         &\wireoverride{n}
    \lstick{$|0\rangle$}   
    & \gate{H}     
    & \qw               
    & \qw 
    & \ldots \ldots     
    & \ctrl{7}   
    & \ldots \ldots             
    & \qw            
    & \qw  
    & \ldots \ldots  
    &\qw     
    &                                
    & \meter{k_{1,\ell-1}} 
    & \qw \\
     &\wireoverride{n} 
        &\wireoverride{n}
         &\wireoverride{n}
         &\wireoverride{n}
    \lstick{$\vdots$}  \\ 
     &\wireoverride{n} 
        &\wireoverride{n}
         &\wireoverride{n}
         &\wireoverride{n}
    \lstick{$\vdots$}  \\
   \lstick[4]{$\ell$} 
        &\wireoverride{n} 
        &\wireoverride{n} 
         &\wireoverride{n}
         &\wireoverride{n}
    \lstick{$|0\rangle$}       
    & \gate{H}  
    & \qw            
    & \qw      
    & \ldots \ldots    
    & \qw  
    & \ldots \ldots  
    & \ctrl{4}   
    & \qw             
    & \ldots \ldots      
    & \qw                
    & \gate[wires=4]{U_{QFT}^{\dagger}} 
    & \meter{k_{d-1,0}}
    & \qw \\
     &\wireoverride{n} 
        &\wireoverride{n} 
         &\wireoverride{n}
         &\wireoverride{n}
    \lstick{$|0\rangle$}    
    & \gate{H}  
    & \qw  
    & \qw                
    & \ldots \ldots  
    & \qw  
    & \ldots \ldots  
    & \qw 
    & \ctrl{3}               
    & \ldots \ldots  
    & \qw 
    &                                
    & \meter{k_{d-1,1}} 
    & \qw \\
     &\wireoverride{n} 
        &\wireoverride{n}  
         &\wireoverride{n}
         &\wireoverride{n}
    \lstick{$\vdots$}    \\
     &\wireoverride{n} 
        &\wireoverride{n} 
         &\wireoverride{n}
         &\wireoverride{n}
    \lstick{$|0\rangle$}   
    & \gate{H}  
    & \qw               
    & \qw        
    & \ldots \ldots 
    & \qw    
    & \ldots \ldots            
    & \qw            
    & \qw  
    & \ldots \ldots  
    & \ctrl{1}      
    &                                
    & \meter{\quad k_{d-1,\ell-1}} 
    & \qw \\
     &\wireoverride{n} 
        &\wireoverride{n}  
         &\wireoverride{n}
         &\wireoverride{n}
    \lstick{$|\psi(d, \vec{\xi})\rangle^{\otimes n}$}      
    & \qw         
    & \gate{U^{2^0}_{(1)}} 
    & \gate{U^{2^1}_{(1)}}    
    & \ldots \ldots      
    & \gate{U^{2^{l-1}}_{(1)}} 
    & \ldots \ldots 
    & \gate{U^{2^0}_{(d-1)}} 
    & \gate{U^{2^1}_{(d-1)}}  
    & \ldots \ldots       
    & \gate{U^{2^{l-1}}_{(d-1)}} 
    & \qw                             
    & \rstick{$|D^{n}(\vec{k})\rangle$}
\end{quantikz}
\end{adjustbox}
    \caption{Circuit diagram for preparing the state $|D^{n}(\vec{k})\rangle$ in $\log$ depth using the standard QPE algorithm. All ancilla wires are qubits. The initial state of the bottom wire is \eqref{qpesudprod}, and 
    $U^{(i)}$ is defined in \eqref{QPEUsud}.}
    \label{fig:SU(d)-qpe}
\end{figure}
 
This circuit is evidently similar to the $SU(2)$ spin-$s$ version in Figure \ref{fig:log-spin-s}, differing mainly in the number of ancillas: the latter has only $\ell$, while the former has $(d-1)\ell$ in order to access all components of 
$\vec{k}$. Each of the $(d-1)\ell$ controlled unitaries can be implemented in constant depth using measurement/feedforward and $n$ additional qubit ancillas, see Result 1 in \cite{Piroli:2024ckr}. Hence, the controlled unitaries can be implemented in depth 
$\mathcal{O}((d-1)\ell)$. Each inverse quantum Fourier transform $U_{\rm QFT}^\dagger$ can be implemented in depth $\mathcal{O}(\ell)$. 
We therefore have the following:

\hypertarget{res: 6}{\noindent{\bf Result 6.}} The state $|D^{n}(\vec{k})\rangle$ can be prepared probabilistically with at worst
$\mathcal{O}(n^{(d-1)/2})$ repetitions and
with depth $\mathcal{O}(d \ell) = \mathcal{O}(d \log n)$, using
$\mathcal{O}(d \ell + n) = \mathcal{O}( d \log n + n)$ qubit ancillas. 

\subsubsection{Constant depth}\label{sec:sudconstantdepth}

Similarly to the case of $SU(2)$ spin-$s$ Dicke states discussed in Sec. \ref{sec:spinsconstantdepth}, variations of the above circuit can prepare the state $|D^{n}(\vec{k})\rangle$ in constant depth, at the cost of introducing additional and/or higher-dimensional ancillas.

The circuit in Figure \ref{fig:simple-const-SU(d)} consists of 
$(d-1)$ separate Hadamard tests using auxiliary qudits of dimension $\mathfrak{d}=n+1$, which is the number of possible values for each $k_i$. The controlled-$\mathcal{U}^{(i)}$ gates are defined,
for $i = 1, \ldots, d-1$, as
\begin{equation}
    C\mathcal{U}^{(i)}\, |y\rangle |x\rangle  
    = \left( \mathcal{U}^{(i)}(x) |y\rangle \right) |x\rangle \,, \quad
    \mathcal{U}^{(i)}(x) = \exp\left( 2 \pi i x \mathbb{K}^{(i)}/\mathfrak{d} \right) 
     = \prod_{j=1}^n \exp \left( 2 \pi i x \mathbbm{k}^{(i)}_j/\mathfrak{d} \right) \,, 
    \label{CcalU}
\end{equation}
where $\mathbb{K}^{(i)}$ and $\mathbbm{k}^{(i)}_j$
are defined in \eqref{Kidef}. These gates can be implemented in constant depth using measurement/feedforward and $n$ additional ancilla qudits of dimension $\mathfrak{d}$, by a generalization of the proof of Result 1 in \cite{Piroli:2024ckr}, see also appendix A in \cite{Zi:2025dgw}. 
We therefore have the following:

\hypertarget{res: 7}{\noindent{\bf Result 7.}} The state $|D^{n}(\vec{k})\rangle$ can be prepared probabilistically with at worst
$\mathcal{O}(n^{(d-1)/2})$ repetitions and
with depth $\mathcal{O}(d)$ (independent of $n$), using $\mathcal{O}(n+d)$ ancillas of dimension $n+1$. 

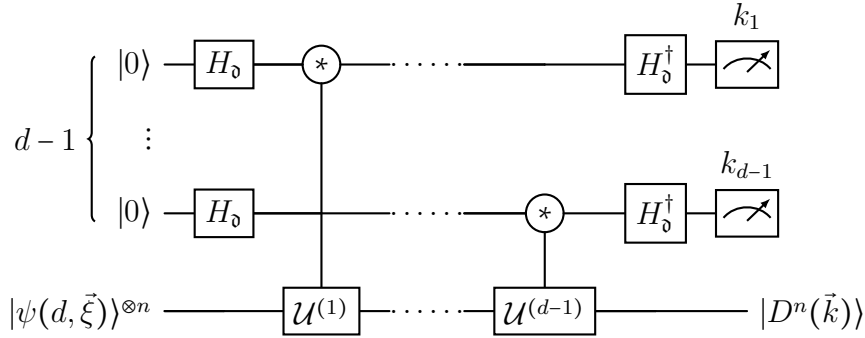
\begin{figure}[H]
    \centering
    \begin{quantikz}[row sep=0.6cm, column sep=0.4cm]
    \lstick[3]{$d-1$} 
        &\wireoverride{n} 
        &\wireoverride{n}
    \lstick{$|0\rangle$} 
    & \gate{H_\mathfrak{d}} 
    & \gateO{*} \vqw{3} 
    & \ldots \ldots  
    & \qw  
    & \gate{H_\mathfrak{d}^{\dagger}} 
    & \meter{k_1} \\
     &\wireoverride{n} 
        &\wireoverride{n}
    \lstick{$\vdots$}      \\
     &\wireoverride{n} 
        &\wireoverride{n}
    \lstick{$|0\rangle$} 
    & \gate{H_\mathfrak{d}} 
    & \qw  
    & \ldots \ldots 
    & \gateO{*} 
    \vqw{1} 
    & \gate{H_\mathfrak{d}^{\dagger}} 
    & \meter{k_{d-1}} \\
     &\wireoverride{n} 
        &\wireoverride{n}
    \lstick{$|\psi(d, \vec{\xi})\rangle^{\otimes n}$} 
    & \qw 
    & \gate{\mathcal{U}^{(1)}} 
    & \ldots \ldots 
    & \gate{\mathcal{U}^{(d-1)}} 
    & \qw  
    & \rstick{$|D^{n}(\vec{k})\rangle$}
\end{quantikz}
    \caption{Circuit diagram for preparing the state $|D^{n}(\vec{k})\rangle$ in constant depth using Hadamard tests. All ancilla wires are qudits of dimension $\mathfrak{d}=n+1$. 
    The initial state of the bottom wire is \eqref{qpesudprod}, and $\mathcal{U}^{(i)}$ is defined in \eqref{CcalU}.}
    \label{fig:simple-const-SU(d)}
\end{figure}

\par Finally, we can formulate an alternative constant-depth $SU(d)$ circuit
using ancillas of dimensions 2 and $d$,
by generalizing the $SU(2)$ spin-$s$ circuit in Figure \ref{fig:O(1)-spi}. Similarly to \eqref{initialalt}, the initial product state \eqref{qpesudprod} can be re-expressed as a superposition of computational basis states $|w\rangle$
\begin{equation}
    |\psi(d, \vec{\xi})\rangle^{\otimes n} =  
    \sum_{\substack{k'_i =0, 1, \ldots, n \\
k'_0 + k'_1 + \ldots + k'_{d-1} = n}}
    \sum_{w\in\mathfrak{S}_{M(\vec k')}}\, \alpha_{\vec{k'}, w}\, |w\rangle \,,
    \qquad \alpha_{\vec{k'}, w} 
    = \frac{\sqrt{P(\vec{k'})}}{ \sqrt{\binom{n}{\vec k'}}} \,,
\end{equation}
where the multiset $M(\vec k')$ is defined in \eqref{multiset}.
We fan out $(d-2)\ell + (\ell-1) =  (d-1)\ell -1$ times the state $|w\rangle$
(using $n ( (d-1)\ell  -1)$ qudit ancillas of dimension $n$, and corresponding qudit fan-out gates, denoted by $F$ in Figure \ref{fig:combined}) to obtain the state
\begin{equation}
    \sum_{\substack{k'_i =0, 1, \ldots, n \\
k'_0 + k'_1 + \ldots + k'_{d-1} = n}}
    \sum_{w\in\mathfrak{S}_{M(\vec k')}}\, \alpha_{\vec{k'}, w}\, |w\rangle^{\otimes (d-1)\ell} \,.
\end{equation}
We then use $(d-1)\ell$ qubit ancillas to apply a product of controlled gates $V^i$ defined by
\begin{equation}
    V^i(x)=\left( |0\rangle\langle 0| \right) \otimes\mathbb{I}
    +\left(|1\rangle\langle 1| \right) 
    \otimes U^i(x) \,,
    \quad U^i(x) = \exp\left( 2\pi i(\mathbb{K}^{(i)}-k_i)/2^x \right)\,, \quad  i = 1, \ldots, d-1 \,,
    \label{Videf}
\end{equation}
where $\vec{k}$ is the $\vec{k}$-value 
of the target state $|D^{n}(\vec{k})\rangle $.
With the help of the identity \eqref{identitytheta}, 
one can see that the state of the system just prior to measurement is 
\begin{equation}
    \sqrt{P(\vec{k})}\, |\underline{0}\rangle^{\otimes n ( (d-1)\ell  -1)}|D^{n}(\vec{k})\rangle |0\rangle^{\otimes (d-1)\ell}\, + \ldots \,.
\end{equation}
The circuit therefore succeeds on measuring all $(d-1)\ell$ qubit ancillas to be zero.

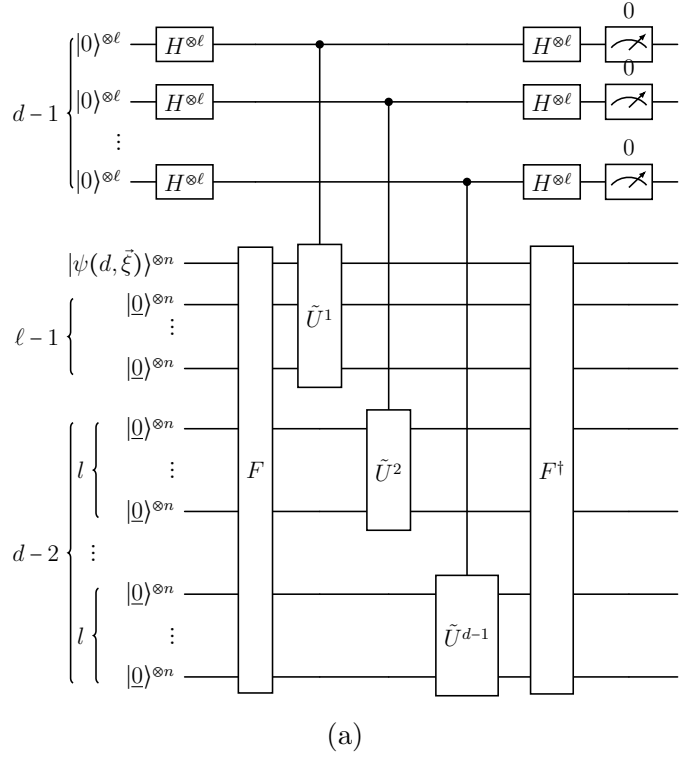
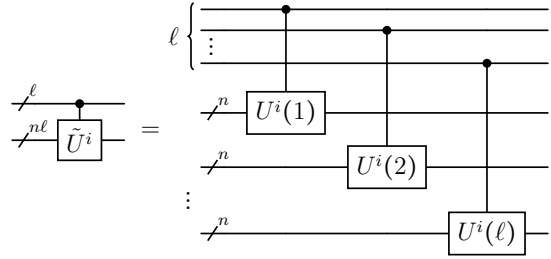
\begin{figure}[H]
\centering

\begin{subfigure}[b]{\textwidth}
\centering
\begin{adjustbox}{scale=0.75,center}

    \begin{quantikz}[row sep=0.4cm, column sep=0.45cm]
    \lstick[4]{$d-1$} 
        &\wireoverride{n} 
        &\wireoverride{n} \lstick{$|0\rangle^{\otimes \ell}$} &
        \gate{H^{\otimes \ell}} && \ctrl{8} 
        &&
        & \gate{H^{\otimes \ell}} 
        & \meter{0} 
        &\\
        &\wireoverride{n} &\wireoverride{n} \lstick{$|0\rangle^{\otimes \ell}$} &
        \gate{H^{\otimes \ell}} &&& \ctrl{8} 
        & 
        & \gate{H^{\otimes \ell}} 
        & \meter{0} 
        & \\
        &\wireoverride{n} &\wireoverride{n} \lstick{\vdots} \\
         &\wireoverride{n} &\wireoverride{n} \lstick{$|0\rangle^{\otimes \ell}$} &
        \gate{H^{\otimes \ell}} &&&
        & \ctrl{10} 
        & \gate{H^{\otimes \ell}} 
        & \meter{0} 
        & \\
        \\  &\wireoverride{n} &\wireoverride{n} &\wireoverride{n} \lstick{$|\psi(d, \vec{\xi})\rangle^{\otimes n}$} &\gate[11]{F} & \gate[4]{\tilde{U}^{1}} &&& \gate[11]{F^{\dagger}} && \rstick{$|D^n(\vec k)\rangle$}
        \\
         \lstick[3]{$\ell-1$} & \wireoverride{n} &\wireoverride{n} &\wireoverride{n} \lstick{$|\underline{0}\rangle ^{\otimes n}$} &&&&&&& \\
        &\wireoverride{n} &\wireoverride{n} &\wireoverride{n} \lstick{\vdots} \\
        &\wireoverride{n}  &\wireoverride{n} &\wireoverride{n} \lstick{$|\underline{0}\rangle^{\otimes n}$} &&&&&& & \\
        \lstick[7]{$d-2$} 
        &\wireoverride{n}   \lstick[3]{$l$} &\wireoverride{n} &\wireoverride{n} \lstick{$|\underline{0}\rangle^{\otimes n}$}
        &&& \gate[3]{\tilde{U}^{2}} &&&&\\   &\wireoverride{n}   &\wireoverride{n}     &\wireoverride{n} 
        \lstick{\vdots} \\   &\wireoverride{n} &\wireoverride{n}  &\wireoverride{n} \lstick{$|\underline{0}\rangle^{\otimes n}$}
         &&&&&&&
        \\
        &\wireoverride{n} \lstick{\vdots} \\ 
          &\wireoverride{n}   \lstick[3]{$l$} &\wireoverride{n} &\wireoverride{n} \lstick{$|\underline{0}\rangle^{\otimes n}$}
        &&&&\gate[3]{\tilde{U}^{d-1}} &&&\\   &\wireoverride{n}   &\wireoverride{n}     &\wireoverride{n} 
        \lstick{\vdots} \\   &\wireoverride{n}  &\wireoverride{n} &\wireoverride{n} \lstick{$|\underline{0}\rangle^{\otimes n}$}
         &&&&&&&
    \end{quantikz}

\end{adjustbox}

\caption{}
\label{fig:main-circuit}
\end{subfigure}

\vspace{0.5cm}

\begin{subfigure}[b]{\textwidth}
\centering
\begin{adjustbox}{scale=0.8,center}
 \begin{quantikz}[row sep=0.2cm, column sep=0.38cm]
     & \qwbundle{\ell} & \ctrl{1} & \\
     & \qwbundle{n \ell}&\gate{\tilde{U}^{i}} &
    \end{quantikz}
    =\begin{quantikz}[row sep=0.2cm, column sep=0.38cm]
  \lstick[4]{$\ell$} & & \ctrl{5} &&& \\
   & && \ctrl{5} && \\ &\wireoverride{n}
    \lstick{\vdots} \\
    &&&& \ctrl{5} & \\
    \lstick{}\\
    & \qwbundle{n} & \gate{U^i(1)} &&& \\
   & \qwbundle{n} && \gate{U^i(2)} && \\
    \lstick{\vdots} \\
   & \qwbundle{n} & && \gate{U^i(\ell)} &
    \end{quantikz}
\end{adjustbox}

\caption{}
\label{fig:subcircuit}
\end{subfigure}

\caption{(a) Circuit diagram for preparing the state 
$|D^{n}(\vec{k})\rangle$ in constant depth. Each of the top $(d-1)$ wires represent $\ell$ qubits, while each of the other wires represent $n$ qudits of dimension $d$. $F$ is a fan-out gate.
(b) Decomposition of the $\tilde{U}^i(x)$ sub-circuit, where $U^i(x)$ is defined in \eqref{Videf}.}
\label{fig:combined}
\end{figure}

The qudit fan-out gates can be implemented in
constant depth (see appendix A in \cite{Zi:2025dgw}), and likewise for the $V^i(x)$ gates (see Result 1 in \cite{Piroli:2024ckr}).
We therefore have the following:

\hypertarget{res: 8}{\noindent{\bf Result 8.}} The state $|D^{n}(\vec{k})\rangle$ can be prepared probabilistically with at worst
$\mathcal{O}(n^{(d-1)/2})$ repetitions and with depth $\mathcal{O}(1)$, using $\mathcal{O}(d \ell) = \mathcal{O}(d \log n)$ ancillas of dimension 2, and
$\mathcal{O}(n d \ell) = \mathcal{O}(n d \log n)$ ancillas of dimension $d$.

\section{Discussion}\label{sec:discussion}

We have presented a number of new ways of preparing qudit Dicke states. The circuits are explicit and straightforward, and are arguably simpler than those previously reported. (Implementations in cirq \cite{cirq} of all the circuits are available on GitHub \cite{GitHub}.)
Indeed, for $SU(2)$ spin-$s$ Dicke states, the sequential preparation circuits in Sec. \ref{sec:spinsDickeSeq} do not require separate treatment of ``edge'' cases, and do not require double-controlled gates, as does the circuit in \cite{Nepomechie:2024fhm}; and the corresponding QPE circuits in Sec. \ref{sec:spinsDickeQPE} are even simpler and
have lower depth, albeit at the expense of using additional and/or higher-dimensional ancillas and requiring multiple repetitions. For $SU(d)$ Dicke states, the comparison of the results in Secs. \ref{sec:sudDickeSeq} and \ref{sec:sudDickeQPE} with that of \cite{Nepomechie:2023lge} is similar. The corresponding algorithm in \cite{Liu:2024taj} has a superior blend of depth, ancillas and repetitions (see Table \ref{table:summary}), but is considerably more complicated. 

For the sequential preparation circuits, it would be in interesting to see if mid-circuit measurement and feedforward (local operations and classical communication, or LOCC) could be used to reduce circuit depth, as has been achieved for the preparation of multiqubit states, see e.g. \cite{Buhrman:2023rft, Piroli:2024ckr, Zi:2025dgw, Smith:2022nbd, Baumer:2023vrf, Baumer:2024jng, Yeo:2025tph}. 

A feature of the circuits in \cite{Nepomechie:2024fhm, Nepomechie:2023lge } is that they can straightforwardly prepare superpositions of Dicke states, following Theorem 2 in \cite{Bartschi2019}. A separate algorithm for preparing superpositions of $SU(d)$ Dicke states is also presented in  \cite{Liu:2024taj}. However, the circuits presented here will require modification and/or additional overhead in order to prepare such superpositions. For example, starting from the $SU(2)$ spin-$s$ circuit using the Hadamard test in Figure \ref{fig:simple-const-spin-s}, one can add a qudit ancilla encoding the amplitudes of the target superposition of Dicke states (which is ultimately measured),
and add suitable gates encoding the corresponding $k$-values; however, the success probability of preparing the target superposition will be smaller than for the original circuit, due to the measurement of the additional ancilla.

\section*{Acknowledgments}

We thank David Raveh for valuable discussions and comments on a preliminary draft.  
NK was supported by the National Science Foundation under grant 2244126; RN is supported in part by the National Science Foundation under grant PHY 2310594, and by a Cooper fellowship.

\appendix

\section{Proof of Eq. \eqref{proved}}\label{sec:proof}

We provide here a proof of Eq. \eqref{proved}, which is equivalent to the Proposition below. We use the notation introduced in Sec. \ref{sec:sudDickeSeq}. Our proof makes use of the following lemma. 

\begin{lemma*}
If $\vec{y} \in \mathcal{A}^{i+1}(\vec{k})$ and $\vec{y}>_{\rm lex} \vec{a} + \hat{0}$ for some $\vec{a} \in \mathcal{A}^{i}(\vec{k})$, then $\vec{y} = \vec{x} + \hat{0}$ for some $\vec{x} \in \mathcal{A}^{i}(\vec{k})$.
\end{lemma*}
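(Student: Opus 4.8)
The plan is to reduce the entire statement to a single inequality on the leading coordinate of $\vec{y}$. First I would observe that, since $\hat{0} = (1, 0, \ldots, 0)$, the conclusion ``$\vec{y} = \vec{x} + \hat{0}$ for some $\vec{x} \in \mathcal{A}^{i}(\vec{k})$'' is equivalent to the single condition $y_0 \ge 1$: if $y_0 \ge 1$ one simply takes $\vec{x} = \vec{y} - \hat{0}$, while conversely $\vec{y} = \vec{x} + \hat{0}$ forces $y_0 = x_0 + 1 \ge 1$. Thus the whole lemma amounts to showing that the lexicographic hypothesis forces $y_0 \ge 1$.

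Next I would verify that the candidate $\vec{x} = \vec{y} - \hat{0}$ really lands in $\mathcal{A}^{i}(\vec{k})$ once $y_0 \ge 1$ is known. Its entries sum to $(i+1) - 1 = i$; for $j \ge 1$ we have $x_j = y_j$, which inherits the bounds $0 \le y_j \le k_j$ from $\vec{y} \in \mathcal{A}^{i+1}(\vec{k})$; and $x_0 = y_0 - 1$ satisfies $0 \le x_0 \le k_0$, the lower bound being exactly $y_0 \ge 1$ and the upper bound following from $y_0 \le k_0$. This is routine bookkeeping and presents no difficulty.

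The heart of the argument is establishing $y_0 \ge 1$, which I would prove by contradiction directly from the reverse-lexicographic hypothesis. Suppose instead $y_0 = 0$. Since $\vec{a} \in \mathcal{A}^{i}(\vec{k})$ has $a_0 \ge 0$, the leading entry of $\vec{a} + \hat{0}$ is $a_0 + 1 \ge 1 > 0 = y_0$. Because lexicographic comparison is decided at the \emph{first} coordinate where the two vectors differ, and here that coordinate is already $j = 0$ with $\vec{a} + \hat{0}$ strictly larger, we get $\vec{a} + \hat{0} >_{\text{lex}} \vec{y}$, contradicting the hypothesis $\vec{y} >_{\text{lex}} \vec{a} + \hat{0}$. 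Hence $y_0 \ge 1$, which closes the argument.

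The only conceptual point—and the sole place where the choice of ordering matters—is that $\hat{0}$ increments precisely the $0$-th coordinate, which is exactly the coordinate that dominates lexicographic comparison. I therefore do not expect a genuine obstacle or any heavy computation: once one recognizes that membership of $\vec{y} - \hat{0}$ in $\mathcal{A}^{i}(\vec{k})$ is controlled entirely by $y_0$, the lex hypothesis delivers the needed bound in essentially one line. The main thing to be careful about is spelling out the equivalence in the first step, so that the reader sees why proving $y_0 \ge 1$ suffices.
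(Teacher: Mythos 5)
Your proposal is correct and follows essentially the same route as the paper: the paper proves the contrapositive by noting that if $\vec{y}$ is not of the form $\vec{x}+\hat{0}$ then $y_0$ must be $0$, whence $\vec{a}+\hat{0} >_{\rm lex} \vec{y}$ at the zeroth coordinate, contradicting the premise — which is exactly your observation that the conclusion is equivalent to $y_0 \ge 1$ plus the same one-line lexicographic comparison. Your explicit bookkeeping that $\vec{y}-\hat{0}$ lies in $\mathcal{A}^{i}(\vec{k})$ is a point the paper leaves implicit, but the argument is the same.
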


\begin{proof} 
We prove the contraposition. That is, we show that the assumption $\vec{y} \ne \vec{x} + \hat{0}$ leads to a contradiction of the premise. Indeed, let us suppose that 
\begin{equation} 
\vec{y} \ne \vec{x} + \hat{0} = (x_0+1, x_1, \ldots, x_{d-1})\quad \text{where}\quad  0 \le x_0 \le k_0-1 \,; 0 \le x_r \le k_r\,, \quad  r = 1, \ldots d-1\,; \quad  \sum_{r=0}^{d-1}x_r=i \,.
\end{equation}
Then the zeroth component of $\vec{y}$ must be zero, i.e. $y_0 = 0$. This implies that 
\begin{equation} 
(a_0+1, a_1, \ldots, a_{d-1}) >_{\rm lex} (0, y_1, \ldots, y_{d-1}) \quad \text{for any}\quad \vec{a} \in \mathcal{A}^{i}(\vec{k}) \,.
\end{equation}
In other words, 
$\vec{a} + \hat{0} >_{\rm lex} \vec{y} $, which contradicts the premise.
\end{proof}

We are now ready to prove the following proposition.

\begin{prop*}
Let $\vec{a} \in \mathcal{A}^{i}(\vec{k})$ such that 
$\vec{a} + \hat{0} \in \mathcal{A}^{i+1}(\vec{k})$. Then the indices of $\vec{a}+\hat{0}$ and $\vec{a}$ satisfy
\begin{equation}
    J^{i+1}(\vec{a}+\hat{0}) \le J^{i}(\vec{a}) \,.
    \label{prop}
\end{equation}
\end{prop*}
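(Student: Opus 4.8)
The plan is to translate the index inequality into a statement about cardinalities and then exhibit an injection whose existence follows almost immediately from the Lemma. The key observation is that, because the level sets are ordered in \emph{reverse} lexicographic order (lex-larger vectors receiving smaller labels), the label of any element counts the elements that precede it, i.e.
\begin{equation}
J^{i}(\vec{a}) = \#\{\vec{x} \in \mathcal{A}^{i}(\vec{k}) : \vec{x} >_{\rm lex} \vec{a}\} \,, \qquad
J^{i+1}(\vec{a}+\hat{0}) = \#\{\vec{y} \in \mathcal{A}^{i+1}(\vec{k}) : \vec{y} >_{\rm lex} \vec{a}+\hat{0}\} \,.
\end{equation}
With this reformulation, the desired inequality \eqref{prop} is equivalent to the cardinality bound
\begin{equation}
\#\{\vec{y} \in \mathcal{A}^{i+1}(\vec{k}) : \vec{y} >_{\rm lex} \vec{a}+\hat{0}\} \le \#\{\vec{x} \in \mathcal{A}^{i}(\vec{k}) : \vec{x} >_{\rm lex} \vec{a}\} \,.
\end{equation}

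First I would invoke the Lemma to control the structure of the left-hand set: every $\vec{y} \in \mathcal{A}^{i+1}(\vec{k})$ with $\vec{y} >_{\rm lex} \vec{a}+\hat{0}$ is of the form $\vec{y} = \vec{x} + \hat{0}$ for some $\vec{x} \in \mathcal{A}^{i}(\vec{k})$. This immediately suggests the candidate map $\Phi : \vec{y} \mapsto \vec{y} - \hat{0}$ from the left-hand set into $\mathcal{A}^{i}(\vec{k})$. Injectivity of $\Phi$ is trivial, since subtracting $\hat{0}$ is a well-defined operation once $y_0 \ge 1$ (which the Lemma guarantees).

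Next I would verify that $\Phi$ actually lands in the target set, i.e.\ that $\vec{y} >_{\rm lex} \vec{a}+\hat{0}$ implies $\Phi(\vec{y}) = \vec{y} - \hat{0} >_{\rm lex} \vec{a}$. Writing $\vec{y} - \hat{0} = (y_0 - 1, y_1, \ldots, y_{d-1})$ and $\vec{a}+\hat{0} = (a_0+1, a_1, \ldots, a_{d-1})$, the comparison of $\vec{y}$ with $\vec{a}+\hat{0}$ and the comparison of $\vec{y}-\hat{0}$ with $\vec{a}$ differ only by a common shift of $1$ in the zeroth coordinate; since lexicographic comparison is decided by the first coordinate where the vectors disagree, and subtracting $1$ from the zeroth coordinate of both $\vec{y}$ and $\vec{a}+\hat{0}$ preserves both the location and the sign of the first discrepancy, the two comparisons have the same outcome. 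Hence $\Phi$ is an injection into $\{\vec{x} \in \mathcal{A}^{i}(\vec{k}) : \vec{x} >_{\rm lex} \vec{a}\}$, the cardinality bound follows, and \eqref{prop} is proved.

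The main obstacle is the coordinate bookkeeping in the last step: one must argue cleanly that shifting the zeroth coordinate of both compared vectors by the same amount leaves the lexicographic order unchanged, distinguishing the case where the discrepancy already occurs at coordinate $0$ from the case where coordinates $0$ agree and the order is decided further along the tail. All the genuine combinatorial content—that lex-large elements of the larger level set necessarily contain a $\hat{0}$ that can be stripped off—is supplied by the Lemma, so beyond this routine verification the argument is a short counting step.
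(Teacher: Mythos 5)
Your proof is correct and follows essentially the same route as the paper's: both identify $J^i(\vec{a})$ with the number of lex-larger elements in $\mathcal{A}^i(\vec{k})$ and use the Lemma to show that every element of $\mathcal{A}^{i+1}(\vec{k})$ lex-larger than $\vec{a}+\hat{0}$ arises by adding $\hat{0}$ to an element of $\mathcal{A}^i(\vec{k})$ lex-larger than $\vec{a}$. The only cosmetic difference is direction: you strip $\hat{0}$ via the injection $\Phi$, while the paper maps the $p$ lex-larger elements forward by adding $\hat{0}$ and invokes the Lemma to rule out any others.
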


\begin{proof}
Define $p:=J^{i}(\vec{a})$. Then there exist $p$ elements $\vec{x}^{(1)}, \ldots, \vec{x}^{(p)} \in \mathcal{A}^{i}(\vec{k})$ such that 
\begin{equation}
\vec{x}^{(r)} >_{\rm lex} \vec{a}\,, \quad r = 1, \ldots, p \,,
\end{equation}
and therefore
\begin{equation}
\vec{x}^{(r)} + \hat{0} >_{\rm lex} \vec{a} + \hat{0}\,, \quad r = 1, \ldots, p \,.
\label{inequality}
\end{equation}
Moreover, the Lemma implies that there are no additional elements
$\vec{y} \in \mathcal{A}^{i+1}(\vec{k})$ (beyond those elements in \eqref{inequality}) that satisfy $\vec{y}>_{\rm lex} \vec{a} + \hat{0}$. We conclude that
\begin{equation}
J^{i+1}(\vec{a}+\hat{0}) \le p \,,
\label{desired}
\end{equation}
as desired. 
\end{proof}

We note that the equality holds in \eqref{desired}
if $\vec{x}^{(r)} + \hat{0} \in \mathcal{A}^{i+1}(\vec{k})$ for all values of $r$, which occurs if $i+1 \le k_0$. Indeed, in this case, $x^{(r)}_0 \le i$ (since $\vec{x}^{(r)} \in \mathcal{A}^{i}(\vec{k})$ implies that $\text{max}(x^{(r)}_0) = \text{min}(k_0, i) = i$), and therefore $\vec{x}^{(r)} + \hat{0}  \in \mathcal{A}^{i+1}(\vec{k})$.

On the other hand, if $i+1 > k_0$, then there exists an element $\vec{\xi} \in \mathcal{A}^{i}(\vec{k})$ defined by
\begin{equation}
\vec{\xi} := (k_0, \text{max}(k_1, i-k_0),  \text{max}(k_2, i-k_0-k_1),\ldots, \text{max}(k_{d-1},i-\sum_{l=0}^{d-2}k_l) 
\end{equation}
that has 0 index, i.e. $J^i(\vec{\xi})=0$. Hence, 
$\vec{\xi} >_{\rm lex} \vec{a}$ (since $\vec{a} + \hat{0} \in \mathcal{A}^{i+1}(\vec{k})$ implies that $a_0<k_0$, while $\xi_0 = k_0$) and $\vec{\xi} + \hat{0} \notin \mathcal{A}^{i+1}(\vec{k})$. Referring again to \eqref{inequality}, we see that
$\vec{x}^{(r)} + \hat{0} \notin \mathcal{A}^{i+1}(\vec{k})$ for at least one value of $r$, which leads to a strict inequality in \eqref{desired}.

\clearpage

\providecommand{\href}[2]{#2}\begingroup\raggedright\endgroup

\end{document}